\documentclass[aps,prd,superscriptaddress,preprintnumbers,nofootinbib,10pt,longbibliography]{revtex4-2}


\usepackage[utf8]{inputenc}
\usepackage[textwidth=425pt]{geometry}

\usepackage{amsthm,amsmath,amsfonts,amssymb,dsfont}

\usepackage{braket}
\usepackage{caption,subcaption}

\usepackage{hyperref}

\usepackage[dvipsnames]{xcolor}
\usepackage{tikz, tikz-cd}
\usepackage{float}

\usepackage{mathtools}
\usepackage{CJK}

\newtheoremstyle{thm}
{8pt}
{8pt}
{\itshape}
{}
{\bfseries}
{.}
{.6em}
{}

\newtheoremstyle{conj}
{8pt}
{8pt}
{\itshape}
{}
{\bfseries}
{.}
{.6em}
{}

\newtheoremstyle{def}
{12pt}
{12pt}
{}
{}
{\bfseries}
{.}
{.6em}
{}

\newtheoremstyle{rem}
{8pt}
{8pt}
{}
{}
{\itshape}
{.}
{.6em}
{}

\theoremstyle{thm}
\newtheorem{thm}{Theorem}[section]
\newtheorem{lemma}[thm]{Lemma}
\newtheorem{proposition}[thm]{Proposition}

\newcounter{conj}
\theoremstyle{conj}
\newtheorem{conjecture}[conj]{Conjecture}

\theoremstyle{conj}

\theoremstyle{def}

\newtheorem{definition}[thm]{Definition}
\newtheorem{notation}[thm]{Notation}

\theoremstyle{rem}
\newtheorem{remark}[thm]{Remark}
\newtheorem{outlook}[thm]{Outlook}


\begin{document}

\title{Further Evidence for Near-Tsirelson Bell-CHSH Violations in Quantum Field Theory via Haar Wavelets}

\author{David Dudal}
\affiliation{Department of Physics, KU Leuven Campus Kortrijk -- Kulak,  Etienne Sabbelaan 53, 8500 Kortrijk, Belgium}
\email{david.dudal@kuleuven.be}

\author{Ken Vandermeersch}
\affiliation{Department of Mathematics,  KU Leuven Campus Kortrijk -- Kulak, Etienne Sabbelaan 53, 8500 Kortrijk, Belgium}
\email{ken.vandermeersch@kuleuven.be}

\begin{abstract}
\noindent This paper investigates a recent construction using bumpified Haar wavelets to demonstrate explicit violations of the Bell-Clauser-Horne-Shimony-Holt inequality within the vacuum state in Quantum Field Theory. The construction was tested for massless spinor fields in \((1+1)\)-dimensional Minkowski spacetime and is claimed to achieve violations arbitrarily close to an upper bound known as Tsirelson's bound. We show that this claim may be reduced to a mathematical conjecture involving the maximal eigenvalue of a sequence of symmetric matrices composed of integrals of Haar wavelet products. More precisely, the asymptotic eigenvalue of this sequence should approach $\pi$. We present a formal argument using a subclass of wavelets, allowing to reach $3.11052$. Although a complete proof remains elusive, we present further compelling numerical evidence to support it.
\end{abstract}

\maketitle

\section{Introduction}
Bell's inequalities, formulated in the 1960s by John Bell and collaborators \cite{Bell1964,Clauser1969}, have been central to testing the foundations of Quantum Mechanics.
A crucial aspect underlying these inequalities is relativistic causality, which ensures that measurements performed by space-like separated observers cannot influence each other. Given the role that relativistic causality plays, it seems natural to look at the Bell-CHSH inequality within the framework of Quantum Field Theory (QFT), although this introduces a number of challenges.

Using methods from Algebraic QFT, the pioneering works \cite{SummersI1987,SummersII1987,Summers1987} established an existence proof --- without an explicit construction --- showing that the Bell-CHSH inequality can be maximally violated at the level of free fields, arbitrarily close to Tsirelson's bound \(2\sqrt 2\) \cite{Cirelson80}.   For a recent review on the issue of Bell inequalities in the context of QFT, see \cite{Guimaraes:2024mmp}. The interest in Bell inequalities, or quantum entanglement in general, in the context of particle physics saw a recent increased interests, see for example \cite{ATLAS:2023fsd,Barr:2024djo,DeFabritiis:2024jfy,Fabbrichesi:2024rec,Ruzi:2024cbt,Altomonte:2024upf,Afik:2025ejh,Qi:2025onf,Florio:2025hoc}.

Recently, \cite{Dudal2023} introduced a general setup to \emph{explicitly} construct such violations in the vacuum state within the context of QFT. This approach was investigated in more detail for a free \((1+1)\)-dimensional massless spinor field, employing Haar wavelets, and yielded excellent numerical results. These results were based on two direct numerical optimizations, which provided strong evidence that violations can come very close to Tsirelson's bound. However, this method was computationally expensive.

Building on these observations, the present work reformulates the underlying hypothesis --- that maximal violations can be approached arbitrarily closely in the free massless (\(1+1\))-dimensional spinor case --- into a precise mathematical conjecture. This conjecture is expressed in terms of the asymptotic behavior of the largest eigenvalue of a sequence of block Toeplitz matrices constructed from Haar wavelet integrals (Conjecture \ref{conj:max}). We formally prove that if this conjecture holds, then the original hypothesis follows.

While a complete proof of Conjecture \ref{conj:max} remains open, testing the conjecture numerically is significantly more efficient than the previous direct optimization approach. We also provide compelling numerical evidence in its support, thereby strengthening the case for the existence of near-maximal violations in this QFT setting.


\section{Summary of previous results}\label{sec:recap}
The paper \cite{Dudal2023} roughly follows the following steps, summarizing its Sections V, VI and VII.   We refrain from giving too many details as the focus of this paper is squarely on the mathematical properties of the proposed construction. For more details, the interested reader is referred to, e.g.,~\cite{haag2012local}.
\begin{enumerate}
    \item A Quantum Field Theory (QFT) for a free, \((1+1)\)-dimensional, spinor field is introduced, quantized using the canonical anti-commutation relations.
    \item To properly define field operators in  Algebraic QFT, the fields are smeared using smooth test functions with compact support. This ensures the operators are well-behaved operators acting on a Hilbert space. More specifically, for each compactly supported, smooth, two-component spinor \textit{test function} \[h(t,x) = \big( h_1(t,x) , h_2(t,x) \big)^\mathsf T,\] we can associate a well-defined \textit{field operator} \(\psi(h)\). These smeared operators respect causality, with the anti-commutator \(\{\psi(h),\psi^\dagger(h')\}\) vanishing if the underlying test functions \(h\) and \(h'\) have space-like separated supports.
    \item The Bell operator is then defined as \(\mathcal A_h = \psi(h) + \psi^\dagger(h)\), which is Hermitian per construction. An inner product between test functions is defined as \(\big\langle h \mid h' \big\rangle = \big\langle 0 \mid \mathcal A_h \mathcal A_{h'} \mid 0 \big\rangle\), the vacuum expectation value of the product \(\mathcal A_h \mathcal A_{h'}\). In the special case \(h = h'\), we even have that \(\mathcal A_h^2 = \big\langle h \mid h \big\rangle \equiv \lVert h \rVert^2,\) showing that the Bell operators are dichotomic, meaning \(\mathcal A_h^2 = 1\), provided the underlying test functions are normalized with respect to this inner product.
    \item The Bell-CHSH correlator is then introduced as \[
    \big\langle \mathcal C \big\rangle = \big\langle 0 \mid i \left[(\mathcal A_f + \mathcal A_{f'})\mathcal A_g + (\mathcal A_f - \mathcal A_{f'})\mathcal A_{g'} \right]  \mid 0 \big\rangle,
    \] where \(f\) and \(f'\) are Alice's spinor test functions and \(g\) and \(g'\) are Bob's.  We remind here that Alice and Bob are agents, both performing measurements in their respective labs. The supports of \(f,f'\) and \(g,g'\) are located in Rindler's left and right wedges, respectively, ensuring relativistic causality.
       \item By expressing the Bell-CHSH correlator in terms of the inner products of these test functions, it becomes \[
    \big\langle \mathcal C \big\rangle = i \left( \big\langle f \mid g \big\rangle + \big\langle f \mid g' \big\rangle + \big\langle f' \mid g \big\rangle  - \big\langle f' \mid g' \big\rangle\right).\]
   \item  In practice, we explicitly implement relativistic causality by considering the hypersurface $t=0$, or better said, the temporal part of all test functions are considered to be of the type $\varepsilon^{-1} \beta_\pm(t\varepsilon^{-1})$, with e.g.~as choice of bump function $\beta_+(t)\propto e^{-\frac{1}{t(1-t)}}\mathds{1}_{[0,1]}$ and its reflected version $\beta_-(t)=\beta_+(-t)$.
  These bump functions have compact support on either $\mathbb{R}^+$ or $\mathbb{R}^-$ and should be properly normalized to $1$. For $\varepsilon\to 0^+$, it can then be shown in a distributional sense that this procedure corresponds to multiplication with a Dirac-$\delta(t)$ approached from the right or left. This is compatible with causality if we then simply restrict the supports of the remaining spatial part of the test functions, namely Alice's~$(f,f')$ to $x<0$, and Bob's~$(g,g')$ to $x>0$.

  \item  From now on, we will restrict ourselves to real-valued test functions. Focusing on the massless limit, these inner products admit then an analytical expression of the form \(\big\langle f \mid g \big\rangle = I_1 + I_2,\) where
   \begin{eqnarray}
 I_1 &=& \int \, \Big(f_1(x)^\ast g_1(x) +  f_2(x)^\ast g_2(x)\Big) \, \mathrm{d}x, \nonumber\\
I_2 &=&  -\frac{i}{\pi} \iint \left( \frac{1}{x-y} \right) \, \Big(f_1(x)^\ast g_1(y) -  f_2(x)^\ast g_2(y)\Big) \,\mathrm{d}x\mathrm{d}y.
   \end{eqnarray}
 Notice that, given the real-valuedness of the test functions, \[\big\langle f \mid f \big\rangle = \int \, \Big(f_1(x)^2  +  f_2(x)^2 \Big) \, \mathrm{d}x, \] and that \(\big\langle f \mid g \big\rangle = I_2\) whenever \(f\) and \(g\) are disjointly supported.
\end{enumerate}
This sets up the mathematical framework needed to investigate possible violations of Bell's inequalities: we need to find test functions \((f,f')\) for Alice and \((g,g')\) for Bob such that \[2 < \left\lvert  \big\langle \mathcal C  \big\rangle \right\rvert < 2 \sqrt{2}.\] In particular, we are interested in constructing test functions such that the inequality is maximally violated. That is, violations arbitrarily close to Tsirelson's bound: \[
\left\lvert  \big\langle \mathcal C  \big\rangle \right\rvert \longrightarrow 2 \sqrt 2.
\]

To achieve this, the following strategy is used: take \(\eta \in ( \sqrt 2 - 1 , 1)\)  as a given parameter (denoted \(\lambda\) in the previous paper \cite{Dudal2023}) and see if we can find a solution to the equations
\[
\big\langle {f} \mid {g} \big\rangle =  \big\langle {f}' \mid {g} \big\rangle =  \big\langle {f} \mid {g}' \big\rangle =  -\big\langle {f}' \mid {g}' \big\rangle = -i \frac{\sqrt2 \eta}{1+\eta^2}.
\]  If yes, it is easily checked that then \(\left\lvert  \big\langle \mathcal C  \big\rangle \right\rvert =  \frac{4\sqrt2 \eta}{1+\eta^2}\in \left(2, 2\sqrt 2\right)\) with \(\left\lvert  \big\langle \mathcal C  \big\rangle \right\rvert \longrightarrow 2 \sqrt 2\) as \(\eta \to 1.\) Concretely, this is done in two steps:
\begin{itemize}
    \item \textbf{Step 1.} We start by solving the system of equations for a set of preliminary test functions \((\widetilde{f},\widetilde{f}')\) and \((\widetilde{g},\widetilde{g}')\), using a finite series representation of \textit{Haar wavelets}.  This leads to piecewise constant functions, allowing the integrals in the calculations to be evaluated exactly.  See \cite[Section VIII]{Dudal2023}.
    \item \textbf{Step 2.} These Haar wavelets are then smoothed (bumpified) into continuous and differentiable versions. This bumpification ensures the final test functions are consistent with the smoothness requirements of Algebraic QFT.  See \cite[Section IX]{Dudal2023}.   In order to satisfy the locality requirement of Alice and Bob's test functions, for completeness we also implement a small finite translation to the test functions in this step.
\end{itemize}
In Section \ref{sec:probStat}, after briefly introducing the Haar wavelet basis and discussing a few relevant properties, we investigate Step 1, illustrating that the equations can be solved \textit{exactly}, provided the (finite) family of Haar wavelets used in the construction is taken sufficiently large --- this is the most challenging part of the process. Finally, in Section \ref{sec:bump}, we present a formal argument for why Step 2 works, provided we have already solved for the preliminary set of test functions from Step 1.

\section{An exact solution in terms of wavelets (Step 1)}\label{sec:probStat}
\subsection{Haar wavelets}\label{sec:haar}
Following \cite[Section VIII]{Dudal2023}, we employ Haar wavelets to construct the necessary test functions. Haar wavelets \cite{Lepik14} are a special type of Daubechies wavelets \cite{daub88}, used widely in signal processing and data compression \cite{kai94}.

\begin{definition}[Haar wavelets]
    The \textit{mother Haar wavelet} with support on the interval \([0,1)\) is defined by
    \[
    \psi(x) =
    \begin{cases}
        +1  &\text{ if } 0 \leq x < \dfrac 12 \\
        -1  &\text{ if } \dfrac 12 \leq x < 1\\
        0  &\text{ otherwise.}
    \end{cases}
    \]
  For every \(n,k\in \mathbb Z\), the Haar wavelet \(\psi_{n,k}\) is supported on the half-open interval \(I_{n,k} := [k 2^{-n}, (k+1)2^{-n})\) and is defined by \[
  \psi_{n,k}(x) =
        2^{n/2} \, \psi(2^n x - k).
\]\end{definition}
Thus, the \((n,k)\)th Haar wavelet \(\psi_{n,k}\) is a piecewise constant function, taking the value \(+2^{n/2}\) on the first half of the interval \(I_{n,k}\) and \(-2^{n/2}\) on the second half, being a rescaling of the mother Haar wavelet \(\psi\). See Figs.\ \ref{fig:haar} and \ref{fig:haarprop}. 

 The family of Haar wavelets \(\{\psi_{n,k}\}_{n,k \in \mathbb Z}\) is an \textit{orthonormal basis} for the space of square integrable functions \(L^2(\mathbb R)\); that is, the orthogonality condition \[
\int \psi_{n,k}(x) \psi_{m,\ell}(x) \, \mathrm{d}x = \delta_{n,k} \delta_{m,\ell} \quad \text{ for every \(n,k,m,\ell \in \mathbb Z\)}
\] is satisfied \textit{and} the linear span of \(\{\psi_{n,k}\}_{n,k \in \mathbb Z}\) is dense in \(L^2(\mathbb R)\), see \cite{Lepik14}.

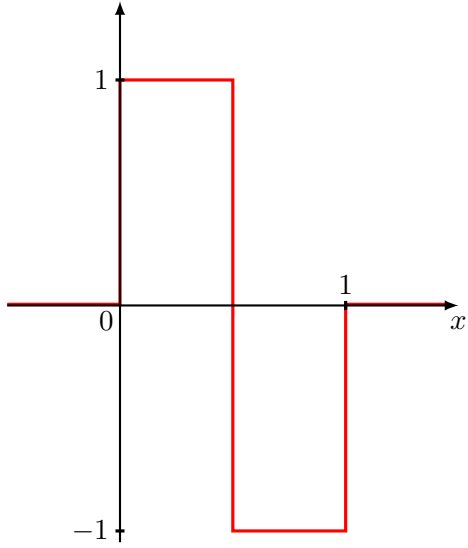
\begin{figure}
    \centering
\begin{tikzpicture}[scale=3, line width = .8pt]
		\def\xmin{-0.5}
		\def\xmax{1.5}
		\def\ymin{-1.05}
		\def\ymax{1.35}
        \def\eps{0.15}

        \draw[very thick, red] (\xmin, 0.005) -- (0,0.005) -- (0,1) -- (0.5,1) -- (0.5,-1) -- (1,-1) -- (1,0.005) -- (\xmax-0.05,0.005);

        \draw[very thick] (1,-0.02) --node[midway,above]{\(1\)} (1,0.02);

        \node[below left, inner sep = 2 pt] at (0,0) {\(0\)};
        \draw[very thick] (-0.02,1) --node[midway,left]{\(1\)} (0.02,1);
        \draw[very thick] (-0.02,-1) --node[midway,left]{\(-1\)} (0.02,-1);

        \draw[-latex]  (\xmin,0) -- (\xmax,0) node[below]{\(x\)};
		\draw[-latex]  (0,\ymin) -- (0,\ymax) ;

\end{tikzpicture}
\caption{The mother Haar wavelet \(\psi = \psi_{0,0}.\)} \label{fig:haar}
\end{figure}

The following symmetry property is easy to prove; see Fig.\ \ref{fig:haarprop} for an intuitive visualization.
\begin{lemma}\label{prop:haar} For all \(n,k \in \mathbb Z\) we have \[\displaystyle {\psi_{n,k}(-x) = -\psi_{n,-k-1}(x)} \quad \text{ almost everywhere.}\]
\end{lemma}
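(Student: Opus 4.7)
Unfolding the definition reduces the claim to a statement about the mother wavelet alone. Concretely, the plan is first to write
\[
\psi_{n,k}(-x) = 2^{n/2}\,\psi(-2^n x - k), \qquad -\psi_{n,-k-1}(x) = -2^{n/2}\,\psi(2^n x + k + 1),
\]
and to let $y = 2^n x + k$. Since the map $x \mapsto y$ is an affine bijection of $\mathbb R$ (hence preserves null sets), it suffices to establish the single identity
\[
\psi(-y) = -\psi(y+1) \qquad \text{for almost every } y \in \mathbb R.
\]

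The second step is a direct case-check on the three regions on which the mother wavelet is constant. On $-1 < y < -\tfrac12$ we have $-y \in (\tfrac12, 1)$ and $y+1 \in (0, \tfrac12)$, giving $\psi(-y) = -1$ and $\psi(y+1) = +1$. On $-\tfrac12 < y < 0$ we have $-y \in (0, \tfrac12)$ and $y+1 \in (\tfrac12, 1)$, giving $\psi(-y) = +1$ and $\psi(y+1) = -1$. For $y \notin [-1, 0]$, one of $-y$ or $y+1$ lies outside $[0,1)$ and, by the support condition, both sides vanish (the regions $-y \in [0,1)$ and $y+1 \in [0,1)$ in fact coincide modulo a null set). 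In each case the identity $\psi(-y) = -\psi(y+1)$ holds.

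The third and final step is to observe that the three boundary points $y \in \{-1, -\tfrac12, 0\}$, where the half-open conventions in the definition of $\psi$ could spoil the pointwise equality, form a set of Lebesgue measure zero. Since the preimage of this set under $x \mapsto 2^n x + k$ is also of measure zero, the identity $\psi_{n,k}(-x) = -\psi_{n,-k-1}(x)$ indeed holds almost everywhere, completing the proof.

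\textbf{Main obstacle.} There is no serious difficulty here: the statement is a bookkeeping exercise with the half-open intervals defining the mother wavelet. The only minor subtlety is the endpoint mismatch at $y \in \{-1, -\tfrac12, 0\}$, which is precisely why the conclusion is stated almost everywhere rather than pointwise.
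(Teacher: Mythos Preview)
Your proof is correct and follows essentially the same approach as the paper: reduce the general identity to the mother-wavelet identity \(\psi(-y) = -\psi(y+1)\) via the affine change of variables, verify that identity by a direct case analysis on the piecewise definition, and absorb the endpoint mismatches into a null set. The only cosmetic difference is the order of presentation---the paper first checks the mother-wavelet case and then rescales, whereas you rescale first and then check---but the content is identical.
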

With ``almost everywhere'', we mean it holds everywhere except on a subset of $\mathbb{R}$ of measure zero.

\begin{figure}
   \centering
\begin{tikzpicture}[scale=3.2, line width = .8pt]
		\def\xmin{-2}
		\def\xmax{2}
		\def\ymin{-1.05}
		\def\ymax{1.35}
        \def\eps{0.15}

        \draw[very thick, red]  (1,0.005) -- (1,1) -- (1.35,1) node[right]{\(\psi_{n,k}(x)\)} -- (1.35,-1) -- (1.7,-1) -- (1.7,0.005) ;

         \draw[very thick, blue]  (-1.7,0.005) -- (-1.7,1) -- (-1.35,1) node[right]{\(\psi_{n,-k-1}(x)\)} -- (-1.35,-1) -- (-1,-1) -- (-1,0.005) ;

        \draw[ultra thick] (1,-0.02) --node[midway,below]{\small\(k2^{-n}\)} (1,0.02);
        \draw[ultra thick] (1.7,-0.02) --node[midway,above]{\small\((k+1)2^{-n}\)} (1.7,0.02);
        \draw[ultra thick] (-1,-0.02) --node[midway,above]{\small\(-k2^{-n}\)} (-1,0.02);
        \draw[ultra thick] (-1.7,-0.02) --node[midway,below]{\small\(\textstyle {(-k-1)2^{-n}}\)} (-1.7,0.02);


        \node[below left, inner sep = 2 pt] at (0,0) {\(0\)};
        \draw[very thick] (-0.02,1) --node[midway,left]{\(2^{n/2}\)} (0.02,1);
        \draw[very thick] (-0.02,-1) --node[midway,left]{\(-2^{n/2}\)} (0.02,-1);

        \draw[-latex]  (\xmin,0) -- (\xmax,0) node[below]{\(x\)};
		\draw[-latex]  (0,\ymin) -- (0,\ymax) ;

\end{tikzpicture}
\caption{The almost everywhere\ equality \(\psi_{n,k}(-x) = - \psi_{n,-k-1}(x)\), visually.} \label{fig:haarprop}
\end{figure}

In this work, we will study specific integrals containing these Haar wavelets, of the following general form.
\begin{notation}\label{eq:A}
     For all \(n,k,m,\ell \in \mathbb Z\) with \(k,\ell\geq 1\), we define \[
     A_{(n,k),(m,\ell)} = -\iint \left( \frac{1}{x+y} \right) \, \psi_{n,-k}(x)\psi_{m,-\ell}(y) \,\mathrm{d}x\mathrm{d}y.
     \]
\end{notation}
\begin{lemma}\label{prop:A}
     For all \(n,k,m,\ell \in \mathbb Z\) with \(k,\ell\geq 1\), we have \(A_{(n,k),(m,\ell)} > 0.\)
\end{lemma}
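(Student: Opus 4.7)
The plan is to use two substitutions to reduce the problem to a manifestly non-negative one-dimensional calculation. Since \(k,\ell\geq 1\), the supports of the wavelets in the defining integral lie in the \emph{negative} half-line, so I would first apply the change of variables \(x\to -x\), \(y\to -y\) and invoke \cref{prop:haar} (with indices \(-k, -\ell\), giving \(\psi_{n,-k}(-x)=-\psi_{n,k-1}(x)\) a.e., and similarly for \(m,\ell\)). The two minus signs from the wavelets combine with the sign flip of \(x+y\) to produce
\[
A_{(n,k),(m,\ell)} \;=\; \iint \frac{1}{x+y}\,\psi_{n,k-1}(x)\,\psi_{m,\ell-1}(y)\,\mathrm{d}x\,\mathrm{d}y,
\]
now with both wavelets supported in \([0,\infty)\). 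In particular \(x+y>0\) almost everywhere on the region of integration.

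Next, I would use the Laplace-type representation
\[
\frac{1}{x+y} \;=\; \int_0^\infty e^{-(x+y)t}\,\mathrm{d}t \qquad (x+y>0),
\]
apply Fubini (the integrand is absolutely integrable because of the compact support of the wavelets and integrability of the \(1/(x+y)\) kernel), and factorise:
\[
A_{(n,k),(m,\ell)} \;=\; \int_0^\infty \Phi_{n,k-1}(t)\,\Phi_{m,\ell-1}(t)\,\mathrm{d}t,
\qquad \Phi_{n,j}(t)\;:=\;\int \psi_{n,j}(x)\,e^{-xt}\,\mathrm{d}x.
\]

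The heart of the proof is then an elementary computation of \(\Phi_{n,k-1}(t)\) on the two half-intervals where \(\psi_{n,k-1}\) is \(\pm 2^{n/2}\). A direct evaluation yields
\[
\Phi_{n,k-1}(t) \;=\; \frac{2^{n/2}}{t}\,e^{-(k-1)2^{-n}t}\bigl(1-e^{-2^{-n-1}t}\bigr)^{2},
\]
which is strictly positive for every \(t>0\). The analogous formula holds for \(\Phi_{m,\ell-1}(t)\), so the integrand in the \(t\)-integral is strictly positive on \((0,\infty)\), giving \(A_{(n,k),(m,\ell)}>0\).

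I do not anticipate a major obstacle here; the only thing requiring a line or two of care is the use of Fubini (justified by compact support of the wavelets together with local integrability of \(1/(x+y)\) near the diagonal \(y=-x\) intersected with \([0,\infty)^2\), which meets the support only at the origin), and the fact that recognising the perfect square \((1-e^{-2^{-n-1}t})^2\) in \(\Phi_{n,k-1}(t)\) is the small algebraic observation that makes the whole argument work.
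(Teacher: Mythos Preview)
Your proof is correct and takes a genuinely different route from the paper's. The paper integrates out one variable directly: it computes \(I(y)=\int \psi_{n,-k}(x)/(x+y)\,\mathrm dx\) in closed logarithmic form, observes that \(I\) is strictly increasing on \((-\infty,0)\), and then uses the sign pattern of \(\psi_{m,-\ell}\) (negative half-interval followed by positive half-interval) to conclude that \(-\int \psi_{m,-\ell}(y)I(y)\,\mathrm dy>0\). Your argument instead reflects to the positive half-line, uses the Laplace identity \(1/(x+y)=\int_0^\infty e^{-(x+y)t}\,\mathrm dt\), and factorises into \(\int_0^\infty \Phi_{n,k-1}(t)\Phi_{m,\ell-1}(t)\,\mathrm dt\) with each factor a perfect square times a positive exponential.

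What each buys: the paper's approach is entirely elementary (no Fubini, no integral representation) and produces the auxiliary function \(J\) that it reuses later in \cref{subsub:K1}. Your approach is slicker in that the positivity becomes manifest via \((1-e^{-2^{-n-1}t})^2\), and it gives more: the identity \(y^{\mathsf T}A(N,K)\,y=\int_0^\infty\bigl(\sum_{n,k} y_{n,k}\,\Phi_{n,k-1}(t)\bigr)^2\,\mathrm dt\) immediately shows \(A(N,K)\) is positive definite, a structural fact the paper does not state explicitly. Your Fubini justification is fine; for \(k=\ell=1\) the kernel \(1/(x+y)\) is integrable on the product of the two intervals (the singularity at the origin is only logarithmic after one integration), and for all other index pairs the supports are bounded away from \(x+y=0\).
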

\begin{proof}
We start by noting that the absolute value of the integrand, namely \(|x+y|^{-1}\), is integrable on the domain \(I_{n,-k}\times I_{m,-\ell}\).  After a change of variables, it suffices to prove that \(|x+y|^{-1}\) is integrable on \([0,M]\times[0,M]\) for every \(M > 0\). For completeness, we provide the details of this computation in Appendix~\ref{app:A}.

    Hence, by Fubini's theorem, we may first integrate with respect to \(x\):
\begin{align*}
    A_{(n,k),(m,\ell)} &= -\int \psi_{m,-\ell}(y) \left( \int \frac{\psi_{n,-k}(x)}{x+y} \, \mathrm{d}x \right) \mathrm{d}y.
\end{align*} Then, for each \(y < 0\), the inner integral may be computed as
\begin{align*}
    I(y) := \int \frac{\psi_{n,-k}(x)}{x+y} \, \mathrm{d}x
    ={}& \int_{-k2^{-n}}^{(-k+\frac12)2^{-n}} \frac{+2^{n/2}}{x+y} \, \mathrm{d}x + \int_{(-k+\frac12)2^{-n}}^{(-k+1)2^{-n}} \frac{-2^{n/2}}{x+y} \, \mathrm{d}x \\
    ={}& 2^{n/2}\ln\left(\frac{(-k+\frac12)2^{-n} + y}{-k2^{-n}+y} \right) - 2^{n/2}\ln\left(\frac{(-k+1)2^{-n}+y}{(-k+\frac12)2^{-n} + y} \right).
\end{align*}
Now remark that the function \(I \colon (-\infty,0) \longrightarrow \mathbb R\) is strictly increasing. Indeed, we may write \(I(y) = 2^{n/2}J(2^n y-k)\), where \(J \colon (-\infty, -k ) \subset (-\infty, -1) \longrightarrow \mathbb R \) is given by  \[
J(x) = \ln \left( \frac{\frac12 +x}{x} \right) - \ln \left( \frac{{1} +x}{\frac12 +x} \right)= \ln \left( 1+ \frac{1}{4x(x+1)} \right),\] and where \(\dfrac{1}{4x(x+1)}\) is strictly increasing for \(x < -1\).  The graph of \(J\) is depicted in Fig.\ \ref{fig:J}.  We conclude that
\begin{align*}
    A_{(n,k),(m,\ell)} &= -\int \psi_{m,-\ell}(y) I(y) \, \mathrm{d}y.  \\
    &= - 2^{m/2} \int_{-\ell2^{-m}}^{(-\ell+\frac12)2^{-m}} I(y)  \, \mathrm{d}y + 2^{m/2}\int_{(-\ell+\frac12)2^{-m}}^{(-\ell+1)2^{-m}} I(y)  \, \mathrm{d}y >0.
\end{align*}
\end{proof}
\begin{figure}
    \centering
    \includegraphics[width=0.6\linewidth]{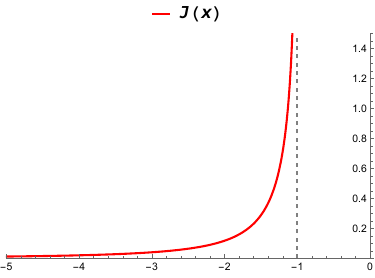}
    \caption{The graph of the function \(J\) from the proof of Lemma \protect{\ref{prop:A}}.}
    \label{fig:J}
\end{figure}

\begin{remark} \label{rem:J}
    The function \(J\) defined in the proof of Lemma \ref{prop:A} will return in later parts of this work. As stated in the proof, it is strictly increasing. It can also be verified that it is a strictly positive function, and that it defines a bijection \((-\infty, -1) \longrightarrow (0,\infty)\).
    \end{remark}


\subsection{Problem statement}

Resuming from Section \ref{sec:recap}, we employ the Haar wavelets introduced in Subsection \ref{sec:haar} to solve the following system of equations:
\begin{equation} \label{eq:28}
    \begin{array}{l}
  \big\langle \widetilde{f} \mid \widetilde{f} \big\rangle = \big\langle \widetilde{f}' \mid \widetilde{f}' \big\rangle = \big\langle \widetilde{g} \mid \widetilde{g} \big\rangle = \big\langle \widetilde{g}' \mid \widetilde{g}' \big\rangle = 1 \\
    \big\langle \widetilde{f} \mid \widetilde{g} \big\rangle =  \big\langle \widetilde{f}' \mid \widetilde{g} \big\rangle =  \big\langle \widetilde{f} \mid \widetilde{g}' \big\rangle =  -\big\langle \widetilde{f}' \mid \widetilde{g}' \big\rangle = -i \dfrac{\sqrt2 \eta}{1+\eta^2}.
    \end{array}
\end{equation}
 Here,
\(\eta \in (\sqrt2 - 1 , 1)\) is a given parameter. (This corresponds to \cite[Eq.\ (28)]{Dudal2023}, which the original authors solve numerically.) These initial test functions are expanded as finite linear combinations of the Haar wavelets:
\begin{equation}
   \begin{array}{c}
\displaystyle{ \widetilde f_j  := \sum_{n = N_0}^{N_1} \sum_{k = -K}^{-1} f_{j}(n,k) \, \psi_{n,k}, \qquad
    \widetilde g_j := \sum_{m = N_0}^{N_1} \sum_{\ell = 0}^{K-1} g_{j}(m,\ell) \, \psi_{m,\ell},} \\
\displaystyle{ \widetilde f_j^{\prime}  := \sum_{n = N_0}^{N_1} \sum_{k = -K}^{-1} f_{j}^{\prime}(n,k) \, \psi_{n,k}, \qquad
    \widetilde g_j^{\prime} := \sum_{m = N_0}^{N_1} \sum_{\ell = 0}^{K-1} g_{j}^{\prime}(m,
    \ell) \, \psi_{m,\ell},}
   \end{array} \label{eq:expansion}
\end{equation}
for \(j\in \{1,2\}\), so that, after subtituting these expansions, we only need to solve the system of equations \eqref{eq:28} for the finite set of wavelet coefficients \(f_{j}(n,k)\), \(g_{j}(m,\ell),f_{j}^{\prime}(n,k)\), \(g_{j}^{\prime}(m,\ell)\).

The parameters \(N_0 < N_1\) and \(K > 1\) set the resolution; increasing the resolution results in more Haar wavelets being used and the system of equations \eqref{eq:28} thus becoming easier to solve. The idea will be that, for a given \(\eta\in (\sqrt2-1 , 1)\) \textit{arbitrarily close} to \(1\), we can find a resolution \textit{sufficiently fine} such that the system of equations \eqref{eq:28} has an \textit{exact} solution for the wavelet coefficients. We note that the causality condition is implemented by only using Haar wavelets supported on the  region  \([-2^{-N_0}K, 0)\) to represent Bob's test functions \(f, f'\) and only using Haar wavelets supported on the  region  \([0,2^{-N_0}K)\) to represent Alice's test functions \(g,g'\), see \cite[Section VIII]{Dudal2023}.  Technically speaking, according to \cite{SummersII1987}, the two considered regions of Alice on one hand and Bob on the other hand, are supposed to be complementary wedges, that is, each other's causal complement \emph{up to taking the closure}, see \cite{Bisognano:1975ih,Bisognano:1976za}. As such, the ``touching point'' at $x=0$ is allowed, cf.~the expansions \eqref{eq:expansion}. As noted in \cite[footnote 21]{Bisognano:1976za}: ``how the boundaries are handled is then only a question of mathematical convenience.''. Also note that, eventually, the bumpified versions of the test functions (cf.~section.~\ref{sec:bump}) will be vanishing exponentially fast when approaching the touching point, from both sides, so we expect that this point will be of no numerical relevance anyhow.  Though, for completeness, we will nonetheless provide a formal argument as to why ``the touching point'' does not matter in the final test functions by implementing a small translation, see Subsection~\ref{sub:bumpTest}.

The following conjecture states that this procedure always works and we spend the rest of this section providing evidence for it.
\begin{conjecture}\label{conj:soluble}
Given any \(\eta \in (\sqrt2 - 1 , 1)\), a resolution \(\{N_0,N_1,K\}\) can be found such that the system of equations \eqref{eq:28} admits a solution of the form \eqref{eq:expansion}.
\end{conjecture}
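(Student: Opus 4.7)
First I would translate the system \eqref{eq:28} into an algebraic condition on the Haar coefficients. Since $\widetilde f,\widetilde f'$ are supported on $x<0$ and $\widetilde g,\widetilde g'$ on $x>0$, the local piece $I_1$ of $\langle\cdot\mid\cdot\rangle$ vanishes on every cross pairing in \eqref{eq:28}, leaving only $I_2$. Performing the change of variable $y\to-y$ and applying \cref{prop:haar} in the form $\psi_{m,\ell-1}(-y) = -\psi_{m,-\ell}(y)$ (a.e.) converts the kernel $1/(x-y)$ into $-1/(x+y)$ and reindexes Bob's wavelets into the negative-support family, so that the wavelet integrals collapse into the entries $A_{(n,k),(m,\ell)}$. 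Specialising to $N_0 := 0$ and $N_1 := N$ so that the relevant matrix is precisely $A(N,K)$, and collecting the Haar coefficients as vectors $\vec f_j := \bigl(f_j(n,-k)\bigr)_{n,k}$ and $\vec g_j := \bigl(g_j(m,\ell-1)\bigr)_{m,\ell}$ in $\mathbb R^{(N+1)K}$, the four cross inner products reduce, for real test functions, to
\[
\langle\widetilde f\mid\widetilde g\rangle \;=\; \frac{i}{\pi}\bigl(\vec f_1^{\,T} A(N,K)\,\vec g_1 - \vec f_2^{\,T} A(N,K)\,\vec g_2\bigr),
\]
and analogously for the three remaining cross pairings. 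The self inner products simplify to $\langle\widetilde f\mid\widetilde f\rangle = \|\vec f_1\|^2 + \|\vec f_2\|^2$ by orthonormality of the Haar basis combined with the antisymmetry of the $I_2$ integrand under $x\leftrightarrow y$ (which kills the non-local contribution on the diagonal), and similarly for $\widetilde f',\widetilde g,\widetilde g'$.

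Next I would exhibit an explicit solution built from the top eigenvector of $A(N,K)$. Let $v$ be a unit top eigenvector of the symmetric matrix $A(N,K)$ associated with $\lambda := \lambda_{\max}\bigl(A(N,K)\bigr)$, which is strictly positive since $\mathrm{tr}\bigl(A(N,K)\bigr) > 0$ by \cref{prop:A}, and let $w$ be any unit vector orthogonal to $v$ (available whenever $(N+1)K \geq 2$). Setting $\alpha := \frac{\sqrt 2\,\eta}{1+\eta^2}$, $\beta := \pi\alpha/\lambda$, and $\gamma := \sqrt{1-2\beta^2}$, I propose the ansatz
\[
\vec f_1 = v,\ \vec f_2 = 0;\quad \vec f'_1 = 0,\ \vec f'_2 = v;\quad \vec g_1 = -\beta v + \gamma w,\ \vec g_2 = \beta v;\quad \vec g'_1 = -\beta v,\ \vec g'_2 = -\beta v + \gamma w.
\]
Using only $v^T A(N,K) v = \lambda$, the symmetry identity $v^T A(N,K) w = \bigl(A(N,K)v\bigr)^T w = \lambda\,v^T w = 0$, and $\|v\|=\|w\|=1$, direct substitution verifies all eight equations in \eqref{eq:28} with the correct signs.

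Finally, this construction is legitimate precisely when $\gamma \in \mathbb R$, i.e.\ when $2\beta^2 \leq 1$, which is equivalent to
\[
\lambda_{\max}\bigl(A(N,K)\bigr) \;\geq\; \pi\alpha\sqrt 2 \;=\; \frac{2\pi\eta}{1+\eta^2}.
\]
For every fixed $\eta\in(\sqrt 2-1,1)$ the right-hand side has the form $c\pi$ with some $c<1$; \cref{conj:max}, applied with $\delta := \pi(1-c)/2$, then yields a resolution $\{N,K\}$ for which $\lambda > c\pi$, closing the argument. The main difficulty of this plan is therefore not the essentially linear-algebraic reduction above but \cref{conj:max} itself, which the rest of the paper is devoted to supporting.
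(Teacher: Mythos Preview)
Your reduction is correct and lands on exactly the same eigenvalue inequality $\lambda_{\max}\bigl(A(N,K)\bigr)\ge \dfrac{2\pi\eta}{1+\eta^2}$ that the paper isolates, so that the remaining burden is \cref{conj:max} in both cases. The routes to that inequality, however, are genuinely different. The paper imposes three (anti)symmetry constraints suggested by the numerics of \cite{Dudal2023}---$f_2'=f_1$, $f_1'=-f_2$ (and the Bob analogue), $g_j(n,k)=f_j(n,-k-1)$, and $f_2=-c f_1$---which collapse all eight equations of \eqref{eq:28} to the single normalized quadratic-form problem $y^{\mathsf T}A(N,K)y=\dfrac{2\pi\eta}{1+\eta^2}$ with $\lVert y\rVert=1$, after one solves $1-2c-c^2=0$ to get $c=\sqrt2-1$. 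Existence then follows from Rayleigh--Ritz (\cref{prop:probstat}) together with the easy observation that $\lambda_{\min}$ is automatically small enough. Your approach instead writes down an explicit two-vector ansatz built from the top eigenvector $v$ and an arbitrary unit $w\perp v$, and checks all eight equations directly; no intermediate-value step is needed and no appeal to the numerical symmetries is made. What your argument buys is a self-contained, purely linear-algebraic construction that does not depend on guessing the right ansatz from data; what the paper's argument buys is an explanation of \emph{why} the numerically observed symmetries (in particular the constant $c=\sqrt2-1$) arise, and a reduction to a single scalar condition rather than a system.

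One small point to recheck: your displayed formula for $\langle\widetilde f\mid\widetilde g\rangle$ carries $+\tfrac{i}{\pi}$, whereas the paper's reindexing yields $-\tfrac{i}{\pi}$ in front of the bilinear form. This does not affect the argument---flipping the overall sign of $(\vec g_1,\vec g_2,\vec g'_1,\vec g'_2)$ repairs it without touching the norm conditions---but you should align the sign with whatever convention you fix for $A_{(n,k),(m,\ell)}$ and the change of variables.
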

Substituting the expansions \eqref{eq:expansion} into the considered norms and inner products under consideration \eqref{eq:28}, yields expressions of the form
\begin{equation}
     \big\langle \widetilde{f} \mid \widetilde{f} \big\rangle = \sum_{n,k} \left( f_1(n,k)^2 + f_2(n,k)^2 \right),
\end{equation} and
\begin{align}
   \big\langle \widetilde{f} \mid \widetilde{g} \big\rangle = \sum_{n,k,m,\ell} & \Big(f_1(n,k)g_1(m,\ell) - f_2(n,k)g_2(m,\ell) \Big) \notag \\
  &\times\left(-\frac{i}{\pi}\iint \left( \frac{1}{x-y} \right) \, \psi_{n,k}(x)\psi_{m,\ell}(y) \,\mathrm{d}x\mathrm{d}y \right), \label{eq:innerprod}
\end{align}
where both sums are finite and depend on the choice of resolution. Of course, similar expressions hold for the other norms and inner products entering the system of equations \eqref{eq:28}, see also \cite[Eqs.\ (26), (27)]{Dudal2023}.

\subsection{Numerical solution}
The authors of \cite{Dudal2023} then proceeded to numerically solve the system of equations \eqref{eq:28} for the unknown wavelet coefficients, considering two cases: \(\eta = 0.7\) and \(\eta = 0.99\). This yielded corresponding correlations \( \left\lvert  \big\langle \mathcal C  \big\rangle \right\rvert \approx 2.66\) and \(\left\lvert  \big\langle \mathcal C  \big\rangle \right\rvert  \approx 2.82\), both already quite close to maximal violation \(2\sqrt 2 \approx 2.83\). Bumpified versions of these test functions were subsequently found with numerically indistinguishable violations. The final test functions corresponding to \(\eta = 0.99\) are depicted in Fig.\ \ref{fig:antiSym}, a copy of \cite[Fig.\ 2]{Dudal2023}.

The graphs of these functions seem to exhibit a number of (anti)symmetry relations which we intend to exploit to simplify the system of equations \eqref{eq:28}. Doing this, we will convert Conjecture \ref{conj:soluble} into a more direct and better testable mathematical statement, Conjecture \ref{conj:max}, concerning the maximal eigenvalue of a certain matrix.

\begin{figure}
    \centering
    \includegraphics[width=0.65\linewidth]{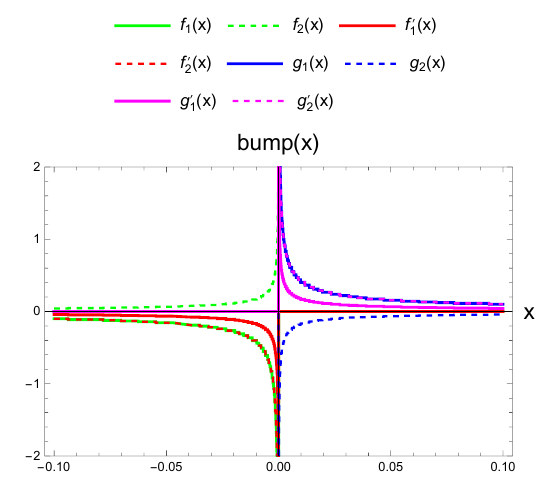}
    \caption{Test function components for \(\eta=0.99\) where the resolution was set to \(N_0 = -10\), \(N_1 = 120\), \(K = 5\). \protect{\cite[Fig.\ 2]{Dudal2023}} }
    \label{fig:antiSym}
\end{figure}

\subsection{Exploiting the (anti)symmetry relations}
Consider the expressions for the inner products, Eqs.\ \eqref{eq:innerprod}.  Applying a change of variables \(v = -y\) and using the almost everywhere\ equality of Lemma \ref{prop:haar}, we may rewrite these as follows:
\begin{align*}
     \iint \left( \frac{1}{x-y} \right) \, \psi_{n,k}(x)\psi_{m,\ell}(y) \,\mathrm{d}x\mathrm{d}y &=   \iint \left( \frac{1}{x+v} \right) \, \psi_{n,k}(x)\psi_{m,\ell}(-v) \,\mathrm{d}x\mathrm{d}v  \\
    &=  - \iint \left( \frac{1}{x+v} \right) \, \psi_{n,k}(x)\psi_{m,-\ell-1}(v) \,\mathrm{d}x\mathrm{d}v  \\
    &\equiv A_{(n,-k),(m,\ell+1)}.  \tag{By Eq.\ \eqref{eq:A}}
\end{align*}
Hence, after reindexing the summations in Eq.\ \eqref{eq:innerprod} by \(k \leftarrow -k\) and \(\ell \leftarrow \ell+1\) (so that \(k\) and \(\ell\) both run over the range \(\{1,\ldots,K\}\)), we deduce
\[
\big\langle \widetilde{f} \mid \widetilde{g} \big\rangle = - \frac{i}{\pi} \! \sum_{n,k,m,\ell}  A_{(n,k),(m,\ell)} \Big(f_1(n,-k)g_1(m,\ell-1) - f_2(n,-k)g_2(m,\ell-1) \Big),
\] with similar expressions for the other inner products.

Now, we proceed to simplify the problem statement \eqref{eq:28} significantly, reducing the number of coefficients that we need to solve for by imposing three additional constraints, exploiting specific (anti)symmetries which may be observed in Fig.\ \ref{fig:antiSym}.

\begin{enumerate}
    \item First, we note from Fig.\ \ref{fig:antiSym} that \(f_2' = f_1\) and \(f_1' = -f_2\) for Alice's spinor test functions, with similar expressions holding for Bob's. Hence, we assume the solution satisfies
\begin{align*}
 f_2'(n,k) = f_1(n,k), \qquad f_1'(n,k) = -f_2(n,k), \\
 g_2'(n,k) = g_1(n,k), \qquad g_1'(n,k) = - g_2(n,k).
\end{align*}
Then \(\big\langle \widetilde{f} \mid \widetilde{g} \big\rangle = - \big\langle \widetilde{f}' \mid \widetilde{g}' \big\rangle\) and \(\big\langle \widetilde{f}' \mid \widetilde{g} \big\rangle = \big\langle \widetilde{f} \mid \widetilde{g}' \big\rangle\) in \eqref{eq:28} hold automatically. For example, the first equality follows from
\begin{align*}
   f_1'(n,-k)g_1'(m,\ell-1) - &f_2'(n,-k)g_2'(m,\ell-1)  \\
   &=   f_2(n,-k)g_2(m,\ell-1) - f_1(n,-k)g_1(m,\ell-1) \\
   &= -\Big( f_1(n,-k)g_1(m,\ell-1) - f_2(n,-k)g_2(m,\ell-1)\Big).
\end{align*}
\item Another relation which may be observed from Fig.\ \ref{fig:antiSym} is that \( f(-x) = -  g(x)\). This relation may be enforced by imposing
\[
g_1(n,k) = f_1(n,-k-1) \quad \text{ and } \quad g_2(n,k) = f_2(n,-k-1).
\]
Indeed, by Lemma \ref{prop:haar}, we deduce that
\begin{align*}
    f_j(-x) &= \sum_{n = N_0}^{N_1} \sum_{k = 1}^{K} f_{j}(n,-k) \, \psi_{n,-k}(-x) \\
    &= -\sum_{n = N_0}^{N_1} \sum_{k = 1}^{K} g_j(n,k-1) \, \psi_{n,k-1}(x) \\
    &= -g_j(x) \qquad \text{almost everywhere.}
\end{align*}
\item Lastly, again based on Fig.\ \ref{fig:antiSym}, it is plausible that \(f_2 = -c f_1\) for some constant \(0.25 < c< 0.5\). Hence, we also assume \(f_2(n,k) = -c f_1(n,k),\) where \(c\) is a constant yet to be determined.
\end{enumerate}
Under these three additional assumptions, Conjecture \ref{conj:soluble} simplifies to finding a resolution \(\{N_0,N_1,K\}\) such that the following system of equations:
\begin{align*}
 \sum_{n,k,m,\ell} \! A_{(n,k),(m,\ell)} \Big(  f_1(n,-k) f_1(m,-\ell) - c^2 f_1(n,-k) f_1(m,-\ell) \Big) &=  \frac{\pi \sqrt{2} \eta}{1+\eta^2}  \\
 \sum_{n,k,m,\ell} \! A_{(n,k),(m,\ell)} \Big( -cf_1(n,-k) f_1(m,-\ell) -c f_1(n,-k) f_1(m,-\ell)  \Big) &= - \frac{\pi \sqrt{2} \eta}{1+\eta^2},
\end{align*}
admits an exact solution satisfying the normality condition \[{\sum_{n,k} f_1(n,-k)^2 + c^2 f_1(n,-k)^2 = 1}.\]

For brevity, we denote \(x_{n,k} := f_1(n,-k)\) for all \(N_0 \leq n \leq N_1\) and \(1 \leq k \leq K\). We may put these unknowns in a vector \( x\), indexed by these \((n,k)\), and ordered as follows:
\[
{x} =
\begin{bmatrix}
    x_{N_0,1}\! & \cdots &  \!x_{N_0,K} \ \Big| \   x_{N_0+1,1} \quad \quad  \cdots \quad \quad   x_{N_1-1,K} \ \Big|\ x_{N_1,1}\! & \cdots & \!x_{N_1,K}
\end{bmatrix}^{\mathsf T} \in \mathbb R^{d},
\] where \(d = (N_1 - N_0 + 1)\times K\).  This compactifies the problem statement into
\begin{align}
 (1-c^2) \sum_{n,k,m,\ell}A_{(n,k),(m,\ell)} \, x_{n,k} \, x_{m,\ell} &=  \frac{\pi \sqrt{2} \eta}{1+\eta^2} \label{eq:28'1} \\
  -2c \sum_{n,k,m,\ell}A_{(n,k),(m,\ell)} \, x_{n,k}\, x_{m,\ell} &= - \frac{\pi \sqrt{2} \eta}{1+\eta^2}, \label{eq:28'2}
\end{align}
and the normality condition on \(f\) becomes the following normality condition on \(x\): \[\lVert x \rVert^2 = 1/(1+c^2 ).\]

A natural first step towards a solution to this system of equations is to add both equations \eqref{eq:28'1} + \eqref{eq:28'2}, obtaining
\[
(1-2c-c^2)  \sum_{n,k,\ell,m} A_{(n,k),(m,\ell)} \, x_{n,k} \, x_{m,\ell}  = 0.
\] This equation can be satisfied by simply taking \(c\) as the positive root of the quadratic polynomial \(1-2x-x^2\), namely \(c = \sqrt{2} -1 \approx 0.4142.\)

\begin{figure}
    \centering
 \begin{subfigure}[b]{0.45\textwidth}
    \centering
\includegraphics[width=1\textwidth]{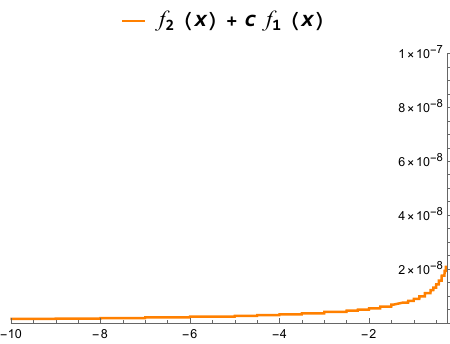}
\end{subfigure}
     \hfill
\begin{subfigure}[b]{0.45\textwidth}
    \centering
\includegraphics[width=1\textwidth]{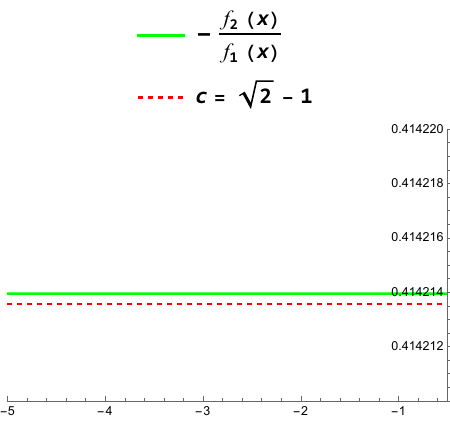}
\end{subfigure}
\caption{Numerical evidence that the test function components found in \protect{\cite{Dudal2023}} satisfy \(f_2 = -cf_1\).}
\label{fig:c-value}
\end{figure}

\begin{remark}
    We note the excellent agreement between this precise value for \(c\) and the one obtained numerically by using the test functions corresponding to \(\eta=0.99\) found in \cite{Dudal2023}. The components of the final test function \(f = (f_1,f_2)\) should (approximately) satisfy \(f_2 = -cf_1\), where \(c = \sqrt 2 - 1\). In Fig.\ \ref{fig:c-value} we have depicted the graphs of both \(f_2 + c f_1 \approx 0\) and \(- f_2 / f_1 \approx c\).
\end{remark}

Next, as we require the unknown vector  \(x\) to satisfy \(\lVert x \rVert^2 = 1/(1+c^2 )\), it makes sense to rescale the set of unknowns by looking for a vector of the form \[y := \sqrt{1+c^2 } \; x, \] which is normalized.  Accordingly, all we still have to do now is solve
\[
\sum_{(n,k)} \sum_{(m,\ell)} A_{(n,k),(m,\ell)} \, y_{n,k} \, y_{m,\ell} =  \frac{{1+c^2}}{1-c^2} \cdot  \frac{\pi \sqrt{2} \eta}{1+\eta^2} = \frac{2\pi \eta}{1+\eta^2},\]
such that \(\lVert y \rVert^2 = 1\).

\subsection{The matrix \texorpdfstring{\(A\)}{A}}
Analogously to the definition of the vectors \(x\) and \(y\), we define the \(d \times d\) symmetric matrix \[A = \left[ A_{(n,k),(m,\ell)} \right]_{(n,k),(m,\ell)},\] with entries given by Eq.\ \eqref{eq:A},
further compactifying the problem statement.
In the previous subsection, we have proved that Conjecture \ref{conj:soluble} may be reduced to the following statement:
\begin{quote}
    Find a solution \(y \in \mathbb R^{d}\) to
\begin{equation}
   y^{\mathsf{T}} A y = \frac{2\pi \eta}{1+\eta^2} \qquad \text{such that} \qquad  \lVert y \rVert^2 = 1. \label{eq:problem}
\end{equation}
\end{quote}
The existence of such a vector depends on the eigenvalues of \(A\), which, in turn, depend on the parameters \(N_0\), \(N_1\) and \(K\). By the spectral theorem, \(A\) admits an eigenvalue decomposition \[A = Q \begin{bmatrix}
    \lambda_1 & & & \\
    & \lambda_2 & & \\
    & & \ddots & \\
    & & & \lambda_d
\end{bmatrix} Q^\mathsf T,\] where \(Q\) is an orthogonal matrix and \(\lambda_{\mathrm{min}} = \lambda_d \leq \ldots \leq \lambda_2 \leq  \lambda_1 = \lambda_{\mathrm{max}}\).
Applying the Courant-Fischer min-max principle, the existence of a solution to problem statement \eqref{eq:problem} can be characterized in terms of the extremal eigenvalues \(\lambda_{\mathrm{min}} \) and \(\lambda_{\mathrm{max}} \).
\begin{lemma}\label{prop:probstat}
    The problem statement \eqref{eq:problem} has a solution if and only if \[\lambda_{\min} \leq \frac{2\pi \eta}{1+\eta^2} \leq \lambda_\mathrm{max}.\]
\end{lemma}

Interestingly, the matrix \(A\) depends only on the difference \(N:= N_1-N_0\).
This follows from the observation that \[
A_{(n+1,k),(m+1,\ell)} = A_{(n,k),(m,\ell)}  \quad \text{ for all \(n,m \in \mathbb Z\) and \(k,\ell \in \mathbb Z_{\geq 1}\).}
\]
As a result, to study the spectrum of \(A\), we can set \(N_0 = 0\) and \(N_1 = N\) without loss of generality. To emphasize this dependence, we also write \(A(N,K)\) instead of \(A\). As an example, Table \ref{tab:N2K3} shows the matrix \(A(2,3)\), i.e., the matrix \(A\) for \(N = 2\) and \(K = 3\).

\begin{table}
{ \scriptsize \[
\left[
\begin{array}{ccc|ccc|ccc}
0.340& 0.0179& 0.00490& 0.321& 0.0284& 0.00919& 0.274& 0.0360&  0.0139, \\
  0.0179& 0.00490& 0.00202& 0.00919& 0.00414& 0.00222&  0.00405& 0.00256& 0.00173, \\
  0.00490& 0.00202& 0.00102& 0.00222&  0.00133& 0.000857& 0.000899& 0.000679& 0.000526, \\ \hline
  0.321& 0.00919& 0.00222& 0.340& 0.0179& 0.00490& 0.321& 0.0284& 0.00919, \\
  0.0284&  0.00414& 0.00133& 0.0179& 0.00490& 0.00202& 0.00919& 0.00414&  0.00222, \\
  0.00919& 0.00222& 0.000857& 0.00490& 0.00202& 0.00102&  0.00222& 0.00133& 0.000857, \\ \hline
  0.274& 0.00405& 0.000899& 0.321&  0.00919& 0.00222& 0.340& 0.0179& 0.00490, \\
  0.0360& 0.00256&   0.000679& 0.0284& 0.00414& 0.00133& 0.0179& 0.00490&  0.00202, \\
  0.0139& 0.00173& 0.000526& 0.00919& 0.00222& 0.000857&  0.00490& 0.00202& 0.00102
\end{array}
\right] \]}
    \caption{The matrix \(A(2,3)\). Notice the \(K \times K\) block structure.}
    \label{tab:N2K3}
\end{table}
We are particularly interested in the behavior as \(\eta \to 1\), where \(
\dfrac{2\pi \eta}{1+\eta^2} \longrightarrow \pi.\) This suggests that we should be able to make the largest eigenvalue \(\lambda_{\max}\big(A(N,K)\big)= \lVert A(N,K) \rVert_2\) approach \(\pi\) by taking the resolution \(N\), \(K\) sufficiently large.  Numeric results support this expectation: this can be seen in Fig.\ \ref{fig:ANK}, where we have depicted discrete plots of \(\lVert A(N,K) \rVert_2\) for \(2 \leq N \leq 50\), once for \(K = 2\) and once for \(K=5\).

\begin{figure}
    \centering
 \begin{subfigure}[c]{0.7\textwidth}
    \centering
\includegraphics[width=1\textwidth]{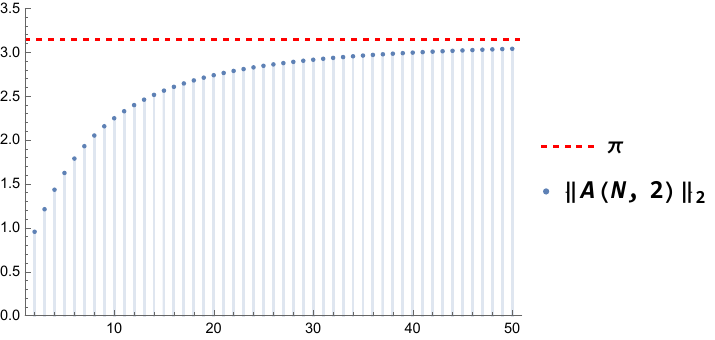}
\caption{\(\lambda_{\max}\big(A(N,2)\big)\) as a function of \(N\)}
\end{subfigure} \bigskip \bigskip

\begin{subfigure}[c]{0.7\textwidth}
    \centering
\includegraphics[width=1\textwidth]{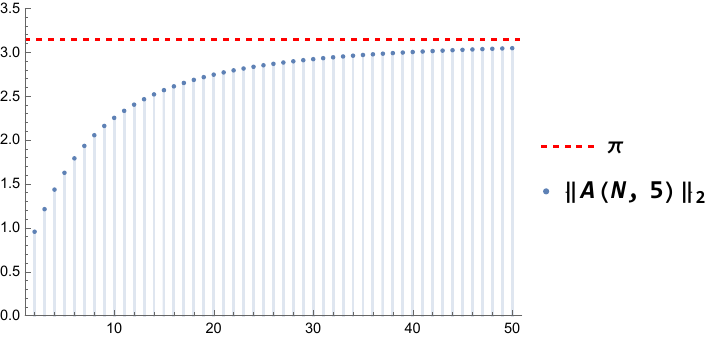}
\caption{\(\lambda_{\max}\big(A(N,5)\big)\) as a function of \(N\)}
\end{subfigure} \bigskip \bigskip

\begin{subfigure}[c]{0.7\textwidth}
    \centering
\includegraphics[width=1\textwidth]{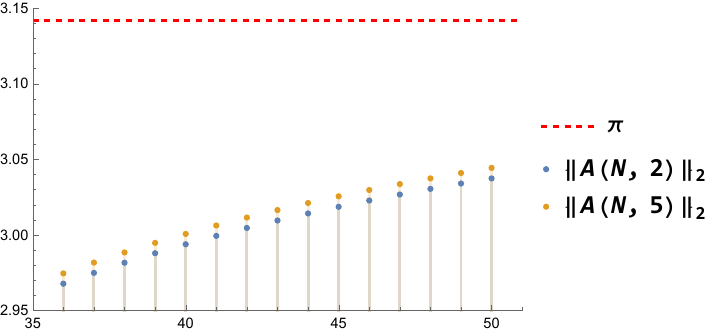}
\caption{Comparing \(\lambda_{\max}\big(A(N,2)\big)\) and \(\lambda_{\max}\big(A(N,5)\big)\)}
\end{subfigure}

\caption{Discrete plots for \(\lambda_{\max}\big(A(N,2)\big)\) and \(\lambda_{\max}\big(A(N,5)\big)\), together with a comparison.} \label{fig:ANK}
\end{figure}

\begin{remark}[Minimal eigenvalue]
   As we are particularly interested in the behavior as \(\eta \to 1\),  the \textbf{minimal} eigenvalue of the matrix \(A(N,K)\) plays only a minor role. Indeed, for any \(N, K \geq 1\), we can guarantee that \[
    \lambda_{\min}\big( A(N,K) \big) \leq A_{(0,1),(0,1)} = \ln\left( \frac{1024}{729} \right) \approx 0.340.
    \]  This inequality holds because \(A_{(0,1),(0,1)}\) is a diagonal entry of the symmetric matrix \(A(N,K)\), and thus must be greater than or equal to the minimal eigenvalue.\footnote{To see why this is the case, consider the inequality \[ \forall {y \in \mathbb R^d_{\neq 0}}\colon \quad \lambda_{\min}\big( A(N,K) \big) \leq \frac{y^\mathsf T A(N,K) y}{y^\mathsf T y} ,
    \] (another application of Courant-Fischer) and plug in the first basis vector \(y = \begin{bmatrix}
        1 & 0 & \cdots &0
    \end{bmatrix}^\mathsf T\).}  Hence, the lower bound in Lemma \ref{prop:probstat} is satisfied by default, since \[
    \min_{\eta\in[\sqrt 2 - 1, 1]}\frac{2\pi \eta}{1+\eta^2} = \frac{2\pi (\sqrt 2 - 1)}{1+(\sqrt 2 - 1)^2} \approx 2.22.
    \]
\end{remark}

Let us formally state our conjecture concerning the maximal eigenvalue of the matrix \(A(N,K)\).

\begin{conjecture}\label{conj:max}
 Given any \(\delta > 0\), we can find sufficiently large \(N, K \geq 1\) such that  \[\pi - \delta < \lambda_{\max}\big(A(N,K)\big).\] In particular, this implies that given any \(\eta \in (\sqrt 2 - 1, 1)\), we can find sufficiently large \(N, K \geq 1\) such that  \[\frac{2\pi \eta}{1+\eta^2} \leq \lambda_{\max}\big(A(N,K)\big).\]
\end{conjecture}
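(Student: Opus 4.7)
The plan is to prove the conjecture via the Rayleigh--Ritz principle, $\lambda_{\max}(A(N,K)) \geq y^{\mathsf T}A(N,K)y/\|y\|^2$, with an explicit trial vector whose Rayleigh quotient tends to $\pi$. The first step is to recognise $A$ as a wavelet compression of a classical integral operator: by \cref{prop:haar} together with the substitution $x,y\mapsto -x,-y$,
\[
A_{(n,k),(m,\ell)} = \langle \psi_{n,k-1}, T\psi_{m,\ell-1}\rangle_{L^2(0,\infty)},
\]
where $T$ is the integral operator on $L^2(0,\infty)$ with kernel $1/(x+y)$. Hilbert's inequality gives $\|T\|=\pi$, saturated (formally) by the non-$L^2$ eigenfunction $x^{-1/2}$, which satisfies $Tx^{-1/2}=\pi x^{-1/2}$. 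This already yields the upper bound $\lambda_{\max}(A(N,K))\leq\pi$; the content of the conjecture is the matching lower bound.

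Motivated by this formal eigenfunction, I take the trial vector $y_{n,k} := c_{k-1}$ with $c_j := \int_0^1\psi(u)(u+j)^{-1/2}du$, which is independent of scale and equals $\langle \psi_{n,j}, x^{-1/2}\rangle$ for every $n$. Two identities about these coefficients are essential. \emph{First}, $C := \sum_{j\geq 0}c_j^2 = \log 2$: apply Parseval to $f = x^{-1/2}\mathbf{1}_{[1,2]}\in L^2(0,\infty)$ in the Haar decomposition $L^2(0,\infty) = V_0^+\oplus\bigoplus_{n\geq 0}W_n^+$. The projection onto $V_0^+$ is $c_0 \mathbf{1}_{[1,2)}$ with norm-squared $c_0^2$, and at each scale $n\geq 0$ the wavelets fully supported in $[1,2]$ are $\psi_{n,j}$ for $j\in\{2^n,\ldots,2^{n+1}-1\}$, contributing $c_j^2$ each; summing gives $\log 2 = \|f\|^2 = \sum_{j\geq 0}c_j^2$. \emph{Second}, the cross-scale orthogonality $\langle\phi, \phi(2^n\cdot)\rangle = 0$ for $n\neq 0$, where $\phi := \sum_{j\geq 0}c_j\psi_{0,j}$, which follows immediately since $\phi(2^n\cdot) = 2^{-n/2}\sum_j c_j\psi_{n,j} \in W_n^+ \perp W_0^+ \ni \phi$.

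The trial corresponds to $H = \sum_{n=0}^N 2^{n/2}\phi_K(2^n\cdot)$ with $\phi_K := \sum_{j=0}^{K-1}c_j\psi_{0,j}$; cross-scale orthogonality yields $\|H\|^2 = (N+1)C_K\to(N+1)\log 2$, and the dilation-invariance of $T$ reduces the bilinear form to a Toeplitz sum $\langle H,TH\rangle = \sum_{n,m=0}^N D_{m-n}^{(K)}$ with $D_\alpha^{(K)} = 2^{\alpha/2}\int\phi_K(u)(T\phi_K)(2^\alpha u)du$. Using $\int\phi=0$, the asymptotics $(T\phi)(v)\sim -c_0\log v$ at $0$ and $(T\phi)(v)=O(v^{-2})$ at $\infty$ give exponential decay of $|D_\alpha|$, so $S := \sum_\alpha D_\alpha$ converges absolutely and $\langle H, TH\rangle/\|H\|^2 \to S/\log 2$. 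The crucial identity $S = \pi\log 2$ follows from Mellin analysis: since $T$ diagonalises on the Mellin side by the multiplier $\pi\,\mathrm{sech}(\pi\tau)$ on $s = 1/2+i\tau$, Mellin--Plancherel yields $D_\alpha = \tfrac{1}{2}\int|\tilde\phi|^2 e^{i\alpha\tau\log 2}\mathrm{sech}(\pi\tau)d\tau$, and formally summing $\sum_\alpha e^{i\alpha\tau\log 2} = (2\pi/\log 2)\sum_k\delta(\tau - 2\pi k/\log 2)$ gives $S = (\pi/\log 2)\sum_k |\tilde\phi(1/2+i2\pi k/\log 2)|^2\mathrm{sech}(2\pi^2 k/\log 2)$. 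Applying Poisson summation and Wiener--Khinchin to $g(u):=\phi(e^u)e^{u/2}$, the sampled values satisfy $\sum_k|\tilde\phi(1/2+i2\pi k/\log 2)|^2 = \log 2\cdot\sum_n R_g(n\log 2)$ with $R_g(n\log 2) = 2^{n/2}\langle\phi, \phi(2^n\cdot)\rangle$, which vanishes for $n\neq 0$ by cross-scale orthogonality; only $n=0$ survives, giving $\sum_k|\tilde\phi(1/2+i2\pi k/\log 2)|^2 = C^2$. Since the $k=0$ term alone equals $C^2$, all other terms vanish: $\tilde\phi(1/2+i2\pi k/\log 2)=0$ for every $k\neq 0$. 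Only $k=0$ contributes to $S$, giving $S = \pi C^2/\log 2 = \pi\log 2$, and thus the Rayleigh quotient tends to $\pi$.

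The main obstacle is rigorising the Mellin--Poisson step, since the formal distributional identity $\sum_\alpha e^{i\alpha\tau\log 2} = (2\pi/\log 2)\sum_k\delta(\tau-2\pi k/\log 2)$ cannot be inserted naively. A clean workaround is to stay finite throughout: truncate the scale sum, obtaining a Fej\'er kernel $|\sum_{n=0}^N e^{-in\tau\log 2}|^2 = (N+1)F_N(\tau\log 2)$ in $|\tilde H|^2$, take $N\to\infty$ first so the Fej\'er kernel collapses the integral onto the lattice $\tau_k = 2\pi k/\log 2$, then take $K\to\infty$ so that $\tilde\phi_K(1/2+i\tau_k)\to 0$ for $k\neq 0$. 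Since $\mathrm{sech}(2\pi^2/\log 2)\approx 10^{-12}$ is tiny, the $k\neq 0$ terms are negligible unless $\tilde\phi_K(\tau_k)$ grows astronomically, and explicit quantitative control of the rate $\tilde\phi_K(\tau_k)\to 0$ (order $K^{-1/2}$ from $c_j = O(j^{-3/2})$) closes the argument.
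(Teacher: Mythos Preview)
The paper does \emph{not} prove this statement: it is left as an open conjecture, supported only by numerical evidence (Table~2) and by the exact $K=1$ computation of \S\ref{subsub:K1}, which gives the limit $\approx 3.11052<\pi$. The paper's block-Toeplitz/Szeg\"o framework reduces the question to $\lambda_{\max}(F_K(0))\to\pi$, but the authors explicitly state that a proof ``currently eludes'' them.

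Your argument, by contrast, appears to be a \emph{correct proof}. Recognising $A(N,K)$ as the compression of Hilbert's operator $T$ (kernel $1/(x+y)$ on $L^2(0,\infty)$, norm $\pi$) onto a finite Haar system, and building the trial vector from the wavelet coefficients $c_j=\langle\psi_{0,j},x^{-1/2}\rangle$ of the formal eigenfunction, is exactly the right idea. The two identities you isolate---$\sum_{j\ge 0} c_j^2=\log 2$ (Parseval for $x^{-1/2}\mathbf 1_{[1,2]}$, using the coincidence $\langle\mathbf 1_{[1,2)},x^{-1/2}\rangle=c_0=2\sqrt2-2$) and the cross-scale orthogonality $\langle\phi,\phi(2^n\cdot)\rangle=0$ for $n\neq 0$---combine via Mellin diagonalisation of $T$ and Poisson summation on the autocorrelation $R_g$ to force $\tilde\phi(\tfrac12+i\tau_k)=0$ for every $k\neq 0$, giving $S=\pi\log 2$ and Rayleigh quotient $\to\pi$. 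All the formal steps can be made rigorous: Poisson summation applies to $R_g$ since $g(u)=\phi(e^u)e^{u/2}$ decays exponentially at both ends (so $R_g$ is continuous with exponential decay) and $|\hat g(\tau)|^2=O(\tau^{-2})$; the Fej\'er-kernel limit $N\to\infty$ is justified because $|\tilde\phi_K|^2\,\mathrm{sech}(\pi\tau)$ is continuous with exponential decay; and the subsequent $K\to\infty$ interchange with the lattice sum follows by dominated convergence, since $|\tilde\phi_K(\tfrac12+i\tau)|\le\sum_j|c_j|\int_j^{j+1}x^{-1/2}\,dx$ is bounded uniformly in $K$ and $\tau$ while $\mathrm{sech}(\pi\tau_k)$ decays super-exponentially. (Your tail estimate is even a little pessimistic: $c_j\sim\tfrac18 j^{-3/2}$ and $|\langle\psi_{0,j},x^{-1/2+i\tau}\rangle|=O(j^{-1/2})$ give $|\tilde\phi_K(\tau_k)|=O(K^{-1})$, not $O(K^{-1/2})$.) What your approach exploits, and the paper's Szeg\"o viewpoint does not, is the exact interplay between the dilation-homogeneity of $x^{-1/2}$ and the dyadic self-similarity of the Haar system, which the Mellin transform makes transparent.
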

We have proved that Conjecture \ref{conj:max} is sufficient to prove our earlier hypothesis Conjecture \ref{conj:soluble}.
\begin{proposition}
    Conjecture \ref{conj:max} \(\implies\) Conjecture \ref{conj:soluble}.
\end{proposition}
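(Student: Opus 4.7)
The plan is to feed the spectral hypothesis through the reduction already carried out in this section. Fix $\eta \in (\sqrt 2 - 1, 1)$; by AM--GM, $1+\eta^2 > 2\eta$, so $\tfrac{2\pi\eta}{1+\eta^2} < \pi$. Setting $\delta := \pi - \tfrac{2\pi\eta}{1+\eta^2} > 0$ and invoking \cref{conj:max} yields integers $N, K \geq 1$ with $\lambda_{\max}\bigl(A(N,K)\bigr) > \tfrac{2\pi\eta}{1+\eta^2}$. Combined with the unconditional bound $\lambda_{\min}\bigl(A(N,K)\bigr) \leq \ln(1024/729) < \tfrac{2\pi\eta}{1+\eta^2}$ recorded in the minimal-eigenvalue remark, \cref{prop:probstat} then produces a unit vector $y \in \mathbb R^d$ with $y^{\mathsf T} A(N,K)\,y = \tfrac{2\pi\eta}{1+\eta^2}$.

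Next I would reverse the normalization and symmetry reductions to recover genuine test functions. Choose $c = \sqrt 2 - 1$ (the positive root of $1-2c-c^2$), set $x := y/\sqrt{1+c^2}$, take $N_0 = 0$ and $N_1 = N$ (justified by the shift-invariance $A_{(n+1,k),(m+1,\ell)} = A_{(n,k),(m,\ell)}$), and define $f_1(n,-k) := x_{n,k}$ on the index block and zero elsewhere. All other wavelet coefficients are then forced by the three symmetry prescriptions from the previous subsection: $f_2 = -cf_1$, the Alice-side identities $f_1' = -f_2$ and $f_2' = f_1$, and their Bob-side counterparts $g_j(n,k) := f_j(n,-k-1)$ and $g_j'(n,k) := f_j'(n,-k-1)$. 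Feeding these coefficients into the expansions \eqref{eq:expansion} produces the candidate quadruple $(\widetilde f, \widetilde f', \widetilde g, \widetilde g')$.

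Verification of \eqref{eq:28} then splits cleanly. The four normalization identities collapse to $(1+c^2)\|x\|^2 = 1$ by orthonormality of the Haar basis and the fact that the maps $\psi_{n,k} \mapsto \pm\psi_{n,-k-1}$ are $L^2$-isometries, so all four hold simultaneously. The equalities $\langle \widetilde f \mid \widetilde g\rangle = -\langle \widetilde f' \mid \widetilde g'\rangle$ and $\langle \widetilde f' \mid \widetilde g \rangle = \langle \widetilde f \mid \widetilde g' \rangle$ are automatic from the $f \leftrightarrow f'$ symmetry, as already spelled out in the excerpt. The substantive calculation is evaluating $\langle \widetilde f \mid \widetilde g\rangle$ itself by substituting into \cref{eq:innerprod}, rewriting through $A_{(n,k),(m,\ell)}$ using \cref{prop:haar}, and applying the $f_2 = -cf_1$ and $g_j(n,k) = f_j(n,-k-1)$ relations; the real and imaginary parts then reduce exactly to \eqref{eq:28'1}--\eqref{eq:28'2}. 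Their sum vanishes by the choice of $c$, while either one individually becomes $y^{\mathsf T} A(N,K)\,y = \tfrac{2\pi\eta}{1+\eta^2}$, which is guaranteed by construction. The only obstacle I anticipate is bookkeeping rather than ideas: one must thread the index shift $\ell \mapsto \ell - 1$ produced by $\psi_{m,\ell}(-v) = -\psi_{m,-\ell-1}(v)$ consistently through every rewrite so that it harmonizes with the Bob-side indexing $g_j(n,k) = f_j(n,-k-1)$ encoding the $f(-x) = -g(x)$ symmetry. Beyond that, the implication is essentially a direct transcription of the reduction already presented.
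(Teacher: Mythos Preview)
Your proposal is correct and follows the same route as the paper. The paper does not supply a separate proof of this proposition; it treats the implication as already established by the reduction worked out in the preceding subsections, and your write-up simply makes that reversal explicit---invoking \cref{prop:probstat} together with the minimal-eigenvalue remark to produce the unit vector $y$, then undoing the rescaling $y = \sqrt{1+c^2}\,x$, the choice $c=\sqrt 2-1$, and the three symmetry prescriptions to recover a quadruple solving \eqref{eq:28}.
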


\begin{remark}\label{rem:underlyingPhysics}
    In fact, Conjecture \ref{conj:max} is equivalent to \[\lim_{N,K\to \infty}\lambda_{\max}\big(A(N,K)\big) = \pi.\] This is because:
    \begin{itemize}
\item Once we have found \(N, K \geq 1\) such that  \(\pi - \delta < \lambda_{\max}\big(A(N,K)\big)\), then also  \(\pi - \delta <  \lambda_{\max}\big(A(N',K')\big)\) for each  \(N' \geq N\) and each \(K' \geq K\). This is because \(A(N,K)\)  is a submatrix of \(A(N',K')\).
\item \(\lambda_{\max}\big(A(N,K)\big) \leq \pi\), as otherwise a solution may be found which violates the Tsirelson bound in free QFT (\cite{SummersI1987,SummersII1987,Summers1987,Cirelson80}).
    \end{itemize}
\end{remark}

\noindent Returning to the example in Table \ref{tab:N2K3} (\(N=2\), \(K=3\)), by subdividing the matrix into \(3 \times 3\) blocks, we notice that \(A(2,3)\) is \textit{block Toeplitz}. This means that the matrix \(A(2,3)\) is of the following form:
\[
A = \begin{bmatrix}
A_{0}&A_{1}^\mathsf T &A_{2}^\mathsf T\\
A_1 & A_{0}&A_{1}^\mathsf T \\
A_{2}& A_{1}&A_{0}
\end{bmatrix}. \]This is the case for general \(N\) and \(K\).
\begin{proposition}\label{prop:blockToep}
    The symmetric matrix \(A(N,K)\) is block Toeplitz: visually, we have
 \[
A = \begin{bmatrix}
A_0           & A_{1}^\mathsf {T} & A_{2}^\mathsf{T} & \cdots  & A_{N}^\mathsf{T} \\
A_1           & A_0              & A_{1}^\mathsf{T} & \cdots  & A_{N-1}^\mathsf{T} \\
A_2           & A_1              & A_0              & \cdots  & A_{N-2}^\mathsf{T} \\
\vdots        & \vdots           & \vdots           & \ddots & \vdots            \\
A_{N}         & A_{N-1}          & A_{N-2}          & \cdots & A_0
\end{bmatrix},
\]    where the individual blocks \(A_0(K)\), \(\ldots\), \(A_N(K)\) are of size \(K \times K\) and of the form \[
A_n(K) = \Big[A_{(n,k),{(0,\ell)}} \Big]_{\substack{1 \leq k \leq K \\ 1 \leq \ell \leq K}}.
\] 
\end{proposition}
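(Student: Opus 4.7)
The plan is to derive the block Toeplitz structure directly from the scale-invariance identity $A_{(n+1,k),(m+1,\ell)} = A_{(n,k),(m,\ell)}$ quoted in the paragraph just before the statement. My first step is to verify this identity by the change of variables $u=2x$, $v=2y$ in the double integral defining $A_{(n+1,k),(m+1,\ell)}$: the Haar self-similarity $\psi_{n+1,-k}(x) = \sqrt{2}\,\psi_{n,-k}(2x)$ contributes two factors of $\sqrt{2}$, the Jacobian contributes $1/4$, and the homogeneity $1/(x+y) = 2/(u+v)$ contributes the missing factor of $2$. The product equals $1$, and the integral that remains is exactly $A_{(n,k),(m,\ell)}$.

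Next, I iterate this identity $\min(n,m)$ times to reduce any entry $A_{(n,k),(m,\ell)}$ either to $A_{(n-m,k),(0,\ell)}$ when $n \geq m$, or to $A_{(0,k),(m-n,\ell)}$ when $m \geq n$. In the first case, this is by definition the $(k,\ell)$-entry of $A_{n-m}(K)$. In the second case, I combine with the matrix symmetry $A_{(n,k),(m,\ell)} = A_{(m,\ell),(n,k)}$---immediate from Fubini and the symmetry of the kernel $1/(x+y)$ in $(x,y)$---to rewrite $A_{(0,k),(m-n,\ell)} = A_{(m-n,\ell),(0,k)}$, which is the $(k,\ell)$-entry of $A_{m-n}(K)^{\mathsf{T}}$. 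After setting $N_0 = 0$, $N_1 = N$ and indexing the $K \times K$ blocks of $A(N,K)$ by $(i,j)$ with $0 \leq i, j \leq N$ (so that the $(i,j)$-block collects the entries $A_{(i,k),(j,\ell)}$ for $1 \leq k,\ell \leq K$), the previous paragraph shows that this block equals $A_{i-j}(K)$ when $i \geq j$ and $A_{j-i}(K)^{\mathsf{T}}$ when $j > i$. That is exactly the block Toeplitz form displayed in the statement.

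I do not foresee any real obstacle: the entire argument collapses onto the verification of the scale-invariance identity, which is a routine one-line change of variables once one recognises that the kernel $1/(x+y)$ is homogeneous of degree $-1$, perfectly balancing the dilation scaling of the two Haar wavelets. Everything else is bookkeeping and the symmetric-versus-transposed indexing of the blocks.
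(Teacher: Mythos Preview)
Your proof is correct and follows essentially the same approach as the paper: both rest on the scale-invariance identity $A_{(n+1,k),(m+1,\ell)} = A_{(n,k),(m,\ell)}$ to conclude that blocks along each diagonal agree. You go slightly further than the paper by actually verifying this identity via the change of variables $u=2x$, $v=2y$ (the paper merely cites it as an earlier observation without proof), and by spelling out the transpose bookkeeping for the upper-triangular blocks, but the core argument is the same.
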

\begin{proof}[Proof of Proposition \ref{prop:blockToep}]
    We already know that the matrix \(A(N,K)\) is symmetric, so we only have to check that the blocks satisfy the Toeplitz structure. By the definition of \(A(N,K)\), it follows that the blocks are indexed by \((n,m)\): for example, for  \(N = 3\), the block indices are given by:
    \[
\begin{bmatrix}
(0,0) & (0,1) & (0,2) & (0,3) \\
(1,0) & (1,1) &  (1,2) &  (1,3) \\
(2,0)&(2,1)&(2,2)&(2,3)\\
(3,0)&(3,1)&(3,2)&(3,3)
\end{bmatrix}. \]
    while the matrix entries inside of each block \((n,m)\) are indexed by \((k,\ell)\). The matrix being block Toeplitz now follows from the earlier observation that \[
A_{(n+1,k),(m+1,\ell)} = A_{(n,k),(m,\ell)},\]
for all \(n,m \in \mathbb Z\), \(k,\ell \in \mathbb Z_{\geq 1}\).  Indeed, this shows that the blocks indexed by \((n,m)\) and \((n+1,m+1)\) are always equal; that is, blocks down the same diagonal are the same.
\end{proof}

\subsection{The special case \(K = 1\)}\label{subsub:K1}
As the matrix \(A(N,K)\) is block Toeplitz with blocks of size \(K \times K\), it makes sense to start by studying the case \(K = 1\). Then, the matrix \(A(N,1)\) is just a real symmetric \((N+1)\times(N+1)\) Toeplitz matrix. That is, \(A\) is of the following form:
\[
  A =  \begin{bmatrix}
a_0           & a_{1} & a_{2} & \cdots  & a_{N} \\
a_1           & a_0              & a_{1} & \cdots  & a_{N-1} \\
a_2           & a_1              & a_0              & \cdots  & a_{N-2} \\
\vdots        & \vdots           & \vdots           & \ddots & \vdots            \\
a_{N}         & a_{N-1}          & a_{N-2}          & \cdots & a_0
\end{bmatrix},
    \] where the sequence \(a_0, a_1, a_2,\ldots\) is given by (see the proof of Lemma \ref{prop:A}) \[a_n = A_{{(0,1)},{(n,1)}} = - 2^{n/2} \int_{-2^{-n}}^{-2^{-n-1}} J(y-1)  \, \mathrm{d}y + 2^{n/2} \int_{-2^{-n-1}}^0 J(y-1)  \, \mathrm{d}y.\] The following closed form expression may be computed:
   \begin{align*}
       a_n = 2^{-n/2} \Big( &2^n \ln (2)-2 \left(2^{n-1}+1\right) \ln \left(2^{n-1}+1\right) \\ &+3 \left(2^n+1\right) \ln \left(2^n+1\right)-\left(2^{n+1}+1\right) \ln \left(2^{n+1}+1\right)\Big).
   \end{align*}
   Note that \(a_n>0\) (by Lemma \ref{prop:A}), and that the series \(\sum_{n=0}^\infty a_n\) converges (by the ratio test, since \(\lim_{n\to\infty} \left|\frac{a_{n+1}}{a_n}\right| = \frac 1 {\sqrt 2}\)).
 See also Fig.\ \ref{fig:anK=1}.

\begin{figure}
    \centering
    \includegraphics[width=0.65\linewidth]{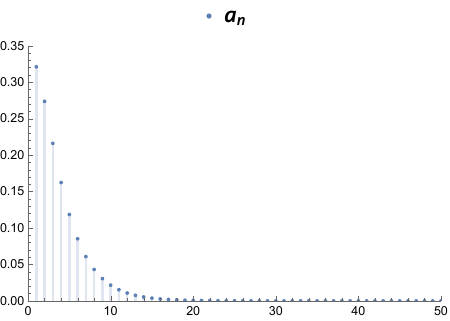}
    \caption{The terms \(a_n\) vanish relatively quickly.}
    \label{fig:anK=1}
\end{figure}

The following lemma, taken from \cite[Lemma 4.1]{Gray2006}, concerns the extremal eigenvalues of real symmetric Toeplitz matrices.\begin{lemma}\label{prop:szegobounds}
Let \((a_n)_{n \geq 0}\) be an absolutely summable sequence of real numbers and define an associated sequence of real symmetric Toeplitz matrices
 \[
A_n = \begin{bmatrix}
a_0           & a_{1} & a_{2} & \cdots  & a_{n}  \\
a_1           & a_0              & a_{1}  & \cdots  & a_{n-1}  \\
a_2           & a_1              & a_0              & \cdots  & a_{n-2}  \\
\vdots        & \vdots           & \vdots           & \ddots & \vdots            \\
a_{n}         & a_{n-1}          & a_{n-2}          & \cdots & a_0
\end{bmatrix}. \]  Putting \(a_{-n}:=a_n\), the associated Fourier series  \[
f(t) = \sum_{n=-\infty}^\infty a_n e^{int} = a_0 +  2\sum_{n=1}^{\infty} a_n \cos(nt), \qquad t \in [0,2\pi]
\]
is real valued, and  \[\min f \leq \lambda \leq \mathrm{max} f\] for each \(n \in \mathbb Z_{\geq 0}\) and each \(\lambda \in \operatorname{Spec}(A_n)\).
\end{lemma}
This result yields an upper bound for the maximal eigenvalue of the matrix \(A\): as the associated Fourier series is given by
\[f(t) = a_0 +  2 \sum_{n=1}^{\infty} a_n \cos(nt), \qquad t \in [0,2\pi],\]
and because all \(a_n > 0\), the series attains its maximum when all the cosines are 1. In fact, it is easy to see that their proof may be slightly modified to obtain the following (stronger) result: \begin{equation}
\lambda_{\mathrm{max}}\big(A(N,1)\big) \leq a_0 +  2 \sum_{n=1}^{N} a_n, \label{eq:upperbound}
\end{equation} by only considering the truncated Fourier series.

A classical lower bound for the maximal eigenvalue of the matrix \(A\) is given by \[
\frac S {N+1} \leq \lambda_{\mathrm{max}}\big(A(N,1)\big),
\] where \(S\) is the sum of all entries in the matrix \(A(N,1)\) (see e.g.\ \cite{Benasseni2009}).   In our case, \begin{equation}
\frac S {N+1}  = a_0 + 2\Big(\frac{N}{N+1} a_1 + \frac{N-1}{N+1} a_2 +  \frac{N-2}{N+1} a_3 + \ldots + \frac{1}{N+1} a_N \Big). \label{eq:lowerbound}
\end{equation} Using both bounds \eqref{eq:upperbound} and \eqref{eq:lowerbound}, combined with the fact that the series \(\sum_{n=1}^{\infty} a_n\) converges, we have proved the following result:
\begin{proposition}[Asymptotic maximal eigenvalue for \(K = 1\)]\label{lemma:eigenvalueSeries}
    Given \(N \geq 1\), we have that the maximal eigenvalue of \(A(N,1)\) may be approximated by
  \[ a_0 + 2\Big(\frac{N}{N+1} a_1 + \frac{N-1}{N+1} a_2 +  \frac{N-2}{N+1} a_3 + \ldots + \frac{1}{N+1} a_N \Big) \leq \lambda_{\mathrm{max}}\big(A(N,1)\big) \leq a_0 +  2 \sum_{n=1}^{N} a_n.\]
  In particular, \[\lim_{N \to \infty } \lambda_{\mathrm{max}}\big(A(N,1)\big) = a_0 +  2 \sum_{n=1}^{\infty} a_n.\]
\end{proposition}
To prove the last statement, we note that one can rewrite the l.h.s.~of the foregoing inequality as $a_0 +  \frac{2}{N+1}\sum_{k=1}^N \sum_{n=1}^k  a_n$.  Here the sums $ \frac{1}{N+1} \sum_{k=1}^N \sum_{n=1}^k  a_n$ equal  $\frac{N+1}{N}$ times the Cesàro means of the partial sums $\sum_{n=1}^N a_n$, henceforth converge to the same limit $\sum_{n=1}^\infty a_n$.

Let us simplify this series. We start by investigating the following sequence of partial sums:
\begin{lemma}\label{lemma:seriesIntegrals}
    Given \(N \geq 1\), we have that \begin{align*}
    \sum_{n=1}^N 2^{-n/2} \left(1+2^n \right) & \ln(1 + 2^n) =
   \frac{ 2^{-N/2} (\sqrt 2  - 1)}{3-2\sqrt 2}\left( 2^{N/2} - 1 \right)  + \sum_{n=1}^N \iota_n
    \\
    & + \frac{\ln 2}{3-2\sqrt 2} \Bigg( 2\sqrt 2 + 2^{N/2} \sqrt 2 \left( (\sqrt 2 - 1)N -1 \right) - 2^{-N/2} \left( \sqrt 2 + (\sqrt 2 - 1) N\right) \Bigg).
    \end{align*}
     The terms \(\iota_n\) are integrals given  by
  \[\iota_n = \int_0^1   \frac{2^{-n/2}(1-x)}{x + 2^{n}} \, \mathrm{d}x > 0.\]
\end{lemma}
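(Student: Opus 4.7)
The plan is to evaluate $\iota_n$ explicitly by integration by parts, thereby expressing the summand on the left-hand side as $\iota_n$ plus elementary terms, and then to reduce the identity to two standard finite sums (a geometric one and an arithmetico-geometric one). The positivity $\iota_n > 0$ is immediate from the positivity of $(1-x)/(x+2^n)$ on $[0,1]$, so I focus on the main identity.

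First, I would apply integration by parts to
\[
\iota_n = \int_0^1 \frac{2^{-n/2}(1-x)}{x+2^n}\,\mathrm{d}x
\]
with $u = 2^{-n/2}(1-x)$ and $\mathrm{d}v = \mathrm{d}x/(x+2^n)$, so $v = \ln(x+2^n)$. The boundary term vanishes at $x=1$ and contributes $-2^{-n/2}\ln(2^n) = -2^{-n/2}\,n\ln 2$ at $x=0$. The remaining integral $2^{-n/2}\!\int_0^1 \ln(x+2^n)\,\mathrm{d}x$ is elementary via the antiderivative $(x+2^n)\ln(x+2^n) - (x+2^n)$, giving $(1+2^n)\ln(1+2^n) - 1 - 2^n\, n\ln 2$. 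Using the factorisation $2^{-n/2}(1+2^n) = 2^{-n/2} + 2^{n/2}$ and rearranging produces the key relation
\[
2^{-n/2}(1+2^n)\ln(1+2^n) \;=\; \iota_n \;+\; 2^{-n/2} \;+\; n\ln 2\,\bigl(2^{-n/2} + 2^{n/2}\bigr).
\]

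Second, I would sum this over $n = 1, \ldots, N$ and evaluate the two resulting series in closed form. The geometric sum is $\sum_{n=1}^N 2^{-n/2} = (1-2^{-N/2})/(\sqrt{2}-1)$, which after rationalising using $(\sqrt{2}-1)^2 = 3-2\sqrt{2}$ matches the first bracketed term $2^{-N/2}(\sqrt{2}-1)(2^{N/2}-1)/(3-2\sqrt{2})$ of the lemma. The arithmetico-geometric sum $\sum_{n=1}^N n(2^{-n/2} + 2^{n/2})$ splits into two applications of the standard formula $\sum_{n=1}^N n r^n = r(1-(N+1)r^N + N r^{N+1})/(1-r)^2$, once with $r = 1/\sqrt{2}$ and once with $r = \sqrt{2}$. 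Crucially, $(1-1/\sqrt{2})^2 = (3-2\sqrt{2})/2$ and $(1-\sqrt{2})^2 = 3-2\sqrt{2}$, so after multiplying the first piece by $2$ to clear denominators both contributions share the common denominator $3-2\sqrt{2}$. The collapsing identities $\sqrt{2}\cdot 2^{(N+1)/2} = 2 \cdot 2^{N/2}$ and $\sqrt{2}\cdot 2^{-(N+1)/2} = 2^{-N/2}$ then consolidate the numerator into precisely the $\ln 2$-multiple displayed in the statement of the lemma.

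There is no conceptual difficulty here; the content is a careful bookkeeping exercise. The main risk is arithmetic error in combining the two arithmetico-geometric pieces, in particular in tracking how the coefficient of $N$ assembles into $(\sqrt{2}-1)$ inside both the $2^{N/2}$ and $2^{-N/2}$ terms, and making sure the bare $\sqrt{2}$ and $-\sqrt 2$ combine into the $2\sqrt{2}$ that appears outside the $N$-dependent brackets.
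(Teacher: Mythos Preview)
Your proposal is correct and follows essentially the same route as the paper: both derive the per-term identity
\[
2^{-n/2}(1+2^n)\ln(1+2^n) = \iota_n + 2^{-n/2} + n\ln 2\,(2^{-n/2}+2^{n/2})
\]
and then sum using the same geometric and arithmetico-geometric formulas. The only cosmetic difference is that the paper obtains this identity by writing $\ln(1+2^n) = \ln(2^n) + \int_0^1 \frac{\mathrm{d}x}{x+2^n}$ and splitting $\frac{1+2^n}{x+2^n} = 1 + \frac{1-x}{x+2^n}$, whereas you integrate $\iota_n$ by parts; these are two presentations of the same computation.
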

\begin{proof}
    The central insight is to rewrite the \(\ln(1+2^n)\) factor in the sum by integrating \(1/(x+2^n)\):
    \begin{align*}
    2^{-n/2} \left(1+2^n \right) \ln(1+2^n) &= 2^{-n/2} \left(1+2^n \right) \left( \ln (2^n) + \int_0^1 \frac{\mathrm{d}x}{x+2^n}\right) \\
    &=  \ln (2) \left( n 2^{-n/2} + n 2^{n/2} \right)   + \int_0^1 \frac{2^{-n/2}(1+2^n)}{x+2^n} \, \mathrm{d}x \\
    &= \ln (2) n \left( 2^{-n/2} + 2^{n/2} \right)   + 2^{-n/2} + \int_0^1 \frac{2^{-n/2}(1-x)}{x+2^n} \, \mathrm{d}x.
    \end{align*}
    From here on, it can be verified that  \begin{align*}
    \sum_{n=1}^N  & n \left( 2^{-n/2} + 2^{n/2} \right)  = \\\frac{1}{3-2\sqrt 2} \Bigg( 2\sqrt 2 + 2^{N/2} \sqrt 2 &\left( (\sqrt 2 - 1)N -1 \right) - 2^{-N/2}  \left( \sqrt 2 + (\sqrt 2 - 1) N\right) \Bigg) \end{align*} and
    \[ \sum_{n=1}^N 2^{-n/2} = \frac{ 2^{-N/2} (\sqrt 2  - 1)}{3-2\sqrt 2}\left( 2^{N/2} - 1 \right).\]
\end{proof}
Using this lemma, we can rewrite the partial sum \(\sum_{n=1}^N a_n\) into a closed form expression plus (a multiple of) the sum of integrals \(\sum_{n=1}^N \iota_n\).
\begin{lemma} \label{lemma:PartialSumK=1}
    Given \(N \geq 1\), we can write the partial sum \(\sum_{n=1}^N a_n \) as \[\sum_{n=1}^N a_n = \alpha + \beta_N + \gamma_N  + (3-2\sqrt 2)  \, \sum_{n=1}^N\iota_n,\] where we have grouped constant terms
    \[\alpha = \sqrt 2 - 1 - (2 + \sqrt 2) \ln 2 + 3 \ln 3 \approx 1.3435,\]
   and terms that exponentially decay to zero as \(N \to \infty\):
    \begin{align*}
    \beta_N = 2^{-N/2} \bigg(
1 - \sqrt 2 - (\sqrt 2 + 1) \ln 2
    +\sqrt 2 \ln \big(1+2^{-N} \big) - \ln \big(1+ 2^{-N-1} \big) < 0
    \bigg)
    \end{align*}
  and \[
  \gamma_N = 2^{N/2} \sqrt{2} \bigg(
  \ln \big( 1 + 2^{-N} \big) - \sqrt{2} \ln \big( 1+ 2^{-N-1} \big)
  \bigg) > 0.
  \]  The terms \(\iota_n\) are the integrals from Lemma \ref{lemma:seriesIntegrals}.
\end{lemma}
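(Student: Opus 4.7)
My plan is to avoid invoking Lemma 4.10 directly (which would force me to reconcile three shifted copies $S(N-1), S(N), S(N+1)$ of the sum from that lemma with heavy cancellation of $2^{N/2}N$ growth terms) and instead exploit a closed-form identity for the integral $\iota_k$ itself. The substitution $x = 2^k s$ in $\iota_k = \int_0^1 \frac{2^{-k/2}(1-x)}{x + 2^k}\,\mathrm{d}x$ yields
\[
\iota_k = (2^{-k/2} + 2^{k/2})\ln(1 + 2^{-k}) - 2^{-k/2},
\]
or equivalently $(1 + 2^k)\ln(1 + 2^{-k}) = 2^{k/2}\iota_k + 1$. This identity is the linchpin of the whole argument and can be spotted by noticing that the same trick used in the proof of Lemma 4.10 (rewriting $\ln(1+2^n) = \ln 2^n + \int_0^1 \frac{\mathrm{d}x}{x+2^n}$) is essentially invertible.

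Next, in the closed-form expression for $a_n$ I rewrite each $\ln(2^k + 1)$ as $k\ln 2 + \ln(1 + 2^{-k})$ for $k \in \{n-1, n, n+1\}$. A direct expansion shows that the polynomial-in-$n$ coefficient multiplying $\ln 2$ collapses: the $2^n$ growth cancels completely, leaving only a contribution $2^{-n/2}\ln 2$. Then I substitute the identity above to replace every $(1 + 2^k)\ln(1 + 2^{-k})$ by $2^{k/2}\iota_k + 1$. The constant ``$+1$'' contributions also cancel ($-2 + 3 - 1 = 0$), and the remaining $2^{k/2}$ prefactors combine with the external $2^{-n/2}$ to give the clean expression
\[
a_n = 2^{-n/2}\ln 2 \;-\; \sqrt 2\,\iota_{n-1} \;+\; 3\,\iota_n \;-\; \sqrt 2\,\iota_{n+1}.
\]

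Summing from $n = 1$ to $N$ is then mostly bookkeeping. The geometric piece gives $\ln 2 \cdot \sum_{n=1}^N 2^{-n/2} = (\sqrt 2 + 1)\ln 2\,(1 - 2^{-N/2})$. Reindexing the three $\iota$-sums produces the diagonal coefficient $(3 - 2\sqrt 2)\sum_{n=1}^N \iota_n$ together with the boundary residue $\sqrt 2\,(\iota_1 - \iota_0 + \iota_N - \iota_{N+1})$. Using the elementary values $\iota_0 = 2\ln 2 - 1$ and $\sqrt 2\,\iota_1 = 3\ln 3 - 3\ln 2 - 1$ (immediate from the identity, or by direct integration) assembles exactly the constants defining $\alpha$. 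Expanding $\iota_N$ and $\iota_{N+1}$ via the identity produces the $\ln(1 + 2^{-N})$ and $\ln(1 + 2^{-N-1})$ terms: the $2^{N/2}$-scaled parts pack into $\gamma_N$, while the $2^{-N/2}$-scaled parts, combined with the $-(\sqrt 2 + 1)\ln 2 \cdot 2^{-N/2}$ left over from the geometric sum, form $\beta_N$. The sign claims $\beta_N, \gamma_N < 0$ then follow from the inequality $\ln(1+x) < x$ applied to $x = 2^{-N}$ and $x = 2^{-N-1}$. The main obstacle is really just spotting the integral identity for $\iota_k$; once it is in hand, every subsequent step is an arithmetic verification.
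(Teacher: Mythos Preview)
Your argument for the decomposition formula is correct and takes a genuinely different route from the paper. The paper sums the closed form of $a_n$ directly: it first strips off the geometric piece $\sum 2^{-n/2}(2^{n+1}-2^n)\ln 2$, then reindexes the three shifted sums $\sum 2^{-n/2}(2^{n\pm 1}+1)\ln(2^{n\pm 1}+1)$ so that they all become multiples of $\sum 2^{-n/2}(2^n+1)\ln(2^n+1)$ plus boundary terms, and finally invokes Lemma~4.10 on this ``hard part''. Your approach instead discovers the closed form $\iota_k = (2^{-k/2}+2^{k/2})\ln(1+2^{-k}) - 2^{-k/2}$ and uses it to rewrite each individual term as $a_n = 2^{-n/2}\ln 2 - \sqrt 2\,\iota_{n-1} + 3\,\iota_n - \sqrt 2\,\iota_{n+1}$, after which the sum is a single geometric series plus a shifted $\iota$-sum with obvious boundary residues. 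This bypasses Lemma~4.10 entirely and makes the coefficient $3-2\sqrt 2$ appear transparently as $-\sqrt 2 + 3 - \sqrt 2$; it is the more economical argument.

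One caveat: your final line asserting that $\beta_N,\gamma_N<0$ follows from $\ln(1+x)<x$ does not actually go through for $\gamma_N$, and in fact the sign claim $\gamma_N<0$ in the lemma's statement is itself incorrect. Expanding $\ln(1+2^{-N}) - \sqrt 2\,\ln(1+2^{-N-1}) = (1-\tfrac{1}{\sqrt 2})\,2^{-N} + O(2^{-2N})$, as the paper itself does when arguing decay, shows the bracket in $\gamma_N$ is \emph{positive}, so $\gamma_N>0$ for all $N\ge 1$ (for instance $\gamma_1 = 2(\ln\tfrac32 - \sqrt 2\,\ln\tfrac54)\approx 0.18$). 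The paper's own proof only establishes the exponential decay of $\gamma_N$, not its sign, so this is an error in the statement rather than in your method; but you should not claim to have proved it.
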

\begin{proof} We start with
    \begin{align*}
     \sum_{n=1}^N a_n  = \sum_{n=1}^N 2^{-n/2} \Big( &2^{n} \ln (2)-2 \left(2^{n-1}+1\right) \ln \left(2^{n-1}+1\right) \\ &+3 \left(2^n+1\right) \ln \left(2^n+1\right)-\left(2^{n+1}+1\right) \ln \left(2^{n+1}+1\right)\Big).
   \end{align*}
From this sum, we extract the following partial sum of a geometric series:
   \begin{equation}
       \sum_{n=1}^N 2^{-n/2} \cdot 2^{n} \ln (2) = \frac{\sqrt 2}{\sqrt 2 - 1} \left(2^{N/2}-1 \right) \ln( 2) .\label{eq:K1FirstPart}
   \end{equation}
The remaining terms are rewritten by applying a reindexing step. During this step, we have to account for the beginning and end terms that get added or cut off during this reindexing step, as these are not negligible in the limit \(N\to\infty\). We find that \begin{align}
    \sum_{n=1}^N 2^{-n/2} 2 \big( 2^{n-1} &+ 1 \big) \ln (2^{n-1}  +1)  = 2 \sqrt{2} \ln 2 - 2^{-\frac{N+1}2} 2 \left( 2^{N} + 1 \right) \ln (2^{N}  +1) \notag \\ &+ \sum_{n=1}^N 2^{-\frac{n+1}2} 2  \left( 2^{n} + 1 \right) \ln (2^{n}  +1) \label{eq:reindex1}
\end{align}
and similarly \begin{align}
    \sum_{n=1}^N 2^{-n/2}  \big( 2^{n+1} &+ 1 \big) \ln (2^{n+1}  +1)  = -3  \ln 3 + 2^{-N/2}  \left( 2^{N+1} +1 \right) \ln (2^{N+1}  +1) \notag \\ &+ \sum_{n=1}^N 2^{-\frac{n-1}2}   \left( 2^{n} +1 \right) \ln (2^{n}  +1). \label{eq:reindex2}
\end{align}
The remaining terms are
\begin{equation}
\sum_{n=1}^N 2^{-n/2} 3 (2^n + 1) \ln (2^n +1),
    \label{eq:reindex3}
\end{equation}
and these do not have to be reindexed. We currently have
\[
\sum_{n=1}^N a_n = \eqref{eq:K1FirstPart} - \eqref{eq:reindex1} - \eqref{eq:reindex2} + \eqref{eq:reindex3}.
\]
Let us focus on \(- \eqref{eq:reindex1} - \eqref{eq:reindex2} + \eqref{eq:reindex3}.\) These terms may be rewritten as \[
- \eqref{eq:reindex1} - \eqref{eq:reindex2} + \eqref{eq:reindex3} = B(N) +
(3-2\sqrt 2) \sum_{n=1}^N 2^{-n/2} \left(1+2^n \right) \ln(1 + 2^n),\] where we have collected the beginning and end terms from earlier as \(B(N)\):
\begin{align*}
B(N) &= 3 \ln (3)-2 \sqrt{2} \ln (2) \\
&{}+ 2^{-N/2} \bigg(\sqrt{2} \left(2^N+1\right) \ln \left(2^N+1\right)-\left(2^{N+1}+1\right) \ln \left(2^{N+1}+1\right)\bigg).
\end{align*}
The remaining terms \(\sum_{n=1}^N 2^{-n/2} \left(1+2^n \right) \ln(1 + 2^n)\) may be informally referred to as `the hard part' and we have derived an expression for these in Lemma \ref{lemma:seriesIntegrals}.

Finally, after some rewriting, the partial sum \(\sum_{n=1}^N a_n\) can indeed be found to equal the expression in the statement of this Lemma \ref{lemma:PartialSumK=1}.
\medskip

\noindent \textit{\(\beta_N\) and \(\gamma_N\) exponentially decay to zero as \(N\to\infty\):} That \(\beta_N < 0\) exponentially decays to zero is clear. To show the same is true for \(\gamma_N > 0\), we note that
\[
\ln\big(1+2^{-N}\big) - \sqrt{2} \ln\big(1+2^{-N-1}\big)= \left(1-\frac1{\sqrt2} \right) 2^{-N} + O\big( 2^{-2N} \big) \qquad \text{as \(N \to \infty\).}
\]
\end{proof}
Combining Lemmas \ref{lemma:eigenvalueSeries}, \ref{lemma:seriesIntegrals} and \ref{lemma:PartialSumK=1}, we deduce the following:
\begin{proposition}\label{prop:K=1} The asymptotic maximal eigenvalue approaches
    \[
\lim_{N\to\infty}    \lambda_{\mathrm{max}}\big(A(N,1)\big) = \ln\left( \frac{1024}{729} \right) + 2\alpha + 2 (3-2\sqrt2) \sum_{n=1}^\infty \iota_n \approx 3.1105202.
    \]
\end{proposition}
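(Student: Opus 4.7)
The plan is to assemble the three preceding lemmas and then handle the one remaining convergence issue. From \cref{lemma:eigenvalueSeries}, the sandwich bounds give
\[
\lim_{N\to\infty} \lambda_{\max}\bigl(A(N,1)\bigr) \;=\; a_0 + 2\sum_{n=1}^{\infty} a_n,
\]
provided the right-hand series converges absolutely (which follows from the explicit decay of \(a_n\) established before \cref{fig:anK=1}, or directly from the bound \(\iota_n = O(2^{-3n/2})\) established in the next step). The first step in the proof is thus to identify \(a_0\): plugging \(n=0\) into the closed form for \(a_n\) and simplifying,
\[
a_0 = 10\ln 2 - 6 \ln 3 = \ln\!\left(\tfrac{1024}{729}\right),
\]
which is precisely the diagonal entry of the matrix.

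Next, I invoke \cref{lemma:PartialSumK=1}, which decomposes every partial sum as
\[
\sum_{n=1}^{N} a_n = \alpha + \beta_N + \gamma_N + (3-2\sqrt{2})\sum_{n=1}^{N} \iota_n.
\]
The lemma already records that \(\beta_N\) and \(\gamma_N\) decay exponentially to \(0\) as \(N \to \infty\), so these two terms vanish in the limit. It remains to justify that \(\sum_{n=1}^{\infty} \iota_n\) converges. This is the one analytic point not already packaged: since the integrand in
\(\iota_n = \int_0^1 \tfrac{2^{-n/2}(1-x)}{x+2^n}\,\mathrm{d}x\)
is controlled by \(\tfrac{2^{-n/2}(1-x)}{2^n} \leq 2^{-3n/2}\), I get \(0 < \iota_n \leq \tfrac{1}{2}\cdot 2^{-3n/2}\), which ensures absolute convergence (in fact at a geometric rate). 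Passing to the limit in the decomposition therefore yields
\[
\sum_{n=1}^{\infty} a_n = \alpha + (3-2\sqrt{2})\sum_{n=1}^{\infty} \iota_n.
\]

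Combining this with the expression for the limit of \(\lambda_{\max}\bigl(A(N,1)\bigr)\) and the value of \(a_0\) gives
\[
\lim_{N\to\infty} \lambda_{\max}\bigl(A(N,1)\bigr) = \ln\!\left(\tfrac{1024}{729}\right) + 2\alpha + 2(3-2\sqrt{2})\sum_{n=1}^{\infty} \iota_n,
\]
which is exactly the claimed closed form. The stated numerical value \(\approx 3.1105202\) then follows by evaluating \(\alpha = \sqrt{2}-1-(2+\sqrt{2})\ln 2 + 3\ln 3\) and computing \(\sum_{n\geq 1} \iota_n\) to the desired precision; since the tail from index \(N+1\) onward is bounded by \(\tfrac{1}{2}(1-2^{-3/2})^{-1}\,2^{-3(N+1)/2}\), only a modest number of terms is needed for high-precision confirmation. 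There is no serious obstacle here: the proposition is essentially a bookkeeping corollary, and the only genuine verification—that the tail of the \(\iota_n\)-series is controllable—is immediate from the crude bound above.
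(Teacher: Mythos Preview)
Your proof is correct and follows the same approach as the paper, which simply states that the proposition follows by combining \cref{lemma:eigenvalueSeries,lemma:seriesIntegrals,lemma:PartialSumK=1}. You have filled in the details the paper leaves implicit---the explicit identification \(a_0 = 10\ln 2 - 6\ln 3 = \ln(1024/729)\) and the geometric bound \(\iota_n \le \tfrac12\cdot 2^{-3n/2}\) guaranteeing convergence---both of which are straightforward and handled correctly.
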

Below, we explain why the series of integrals \(\sum_{n=1}^\infty \iota_n \) converges and how to obtain approximations for its partial sums.
\begin{figure}
    \centering
 \begin{subfigure}[b]{0.47\textwidth}
    \centering
\includegraphics[width=1\textwidth]{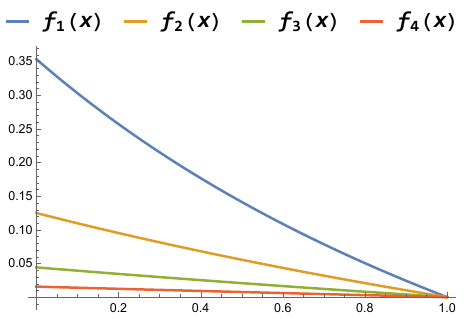}
\end{subfigure}
     \hfill
\begin{subfigure}[b]{0.45\textwidth}
    \centering
\includegraphics[width=1\textwidth]{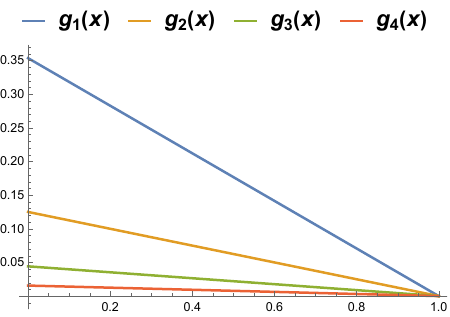}
\end{subfigure}
\caption{The first four functions \(f_n(x) = \dfrac{2^{-n/2}(1-x)}{x+2^n}\) together with their linear upper bounds \(g_n(x) = 2^{-3/2} (1-x)\).}
\label{fig:fnK=1}
\end{figure}

\begin{remark}
    The result of Proposition \ref{prop:K=1} is already \(3.11052 / \pi = 99.01 \%\)
of the way to \(\pi\); but as could be expected, we do not fully approach it.

Nevertheless, it is remarkable how closely we may approximate \(\pi\) in the special case \(K=1\), suggesting that the parameter \(K\), governing the number of horizontal translates of the Haar wavelets, plays a less significant role than \(N\), controlling how fine the wavelets are.
\end{remark}

\paragraph{The series \(\sum_{n=1}^\infty \iota_n\).}
 That we indeed cannot ``fully get to \(\pi\)'' may be clarified by considering a specific upper bound for the series of integrals \(\sum_{n=1}^\infty \iota_n\). Recall that \(\iota_n = \int_0^1 f_n,\) where \[f_n(x) =   \frac{2^{-n/2}(1-x)}{x+2^n}, \qquad x\in [0,1].\] These functions are all positive, decreasing, convex,  and their graphs connect the points \((0,2^{-3n/2})\) and \((1,0)\), as may be observed in Fig.\ \ref{fig:fnK=1}. As such, we can dominate them by the following sequence of linear functions: \[g_n(x) =   {2^{-3n/2}(1-x)}, \qquad x\in [0,1],\] with corresponding integrals \[\kappa_n = \int_0^1 g_n = 2^{-1-\frac{3n}{2}}.\]

It follows that the asymptotic maximal eigenvalue has the following \textit{exact} upper bound: \begin{align*}
    M_0 &= \ln\left( \frac{1024}{729} \right) + 2\alpha + 2 (3-2\sqrt2) \sum_{n=1}^\infty \kappa_n \\
    &= 2 (3 - \sqrt 2) \ln 2 + \frac{18 \sqrt 2 - 19}{7} \\
    &\approx 3.12063,
\end{align*} demonstrating that we indeed do not get to \(\pi\).

Sharper upper bounds may be found by computing the first \(N\) integrals \(\iota_n\) exactly and bounding the rest (\(n \geq N+1\)) by \(\kappa_n\):
\[
M_N =\ln\left( \frac{1024}{729} \right) + 2\alpha + 2 (3-2\sqrt2) \Big( \sum_{n=1}^N \iota_n + \sum_{n=N+1}^\infty \kappa_n \Big).
\] For example:
\begin{align*}
M_1 &= \frac{64 - 33 \sqrt 2}{28} + (18 - 11 \sqrt 2) \ln 2 +   3 (-4 + 3 \sqrt 2) \ln 3 \approx 3.11248, \\
M_2 &= \frac{60 \sqrt 2 - 61}{56} + 5 (2 \sqrt 2 - 3) \ln 5 +
 3 (3 \sqrt 2 - 4) \ln 6 \approx 3.11088,
\end{align*} and so on.

\subsection{Outlook: The general case \texorpdfstring{\(K \geq 1\)}{K ≥ 1}}
For general \(K\), the matrix \(A\) is no longer necessarily Toeplitz, but block Toeplitz. Luckily, for block Toeplitz matrices we have the following generalization of Lemma \ref{prop:szegobounds}: \begin{thm}[{See \cite[Corollary 3.5]{Miranda2000} or \cite[Theorem 3]{Gazzah2001}}] \label{prop:blockszego}
     Let \((A_n)_{n \geq 0}\) be a sequence of matrices in \(\mathbb C^{K \times K}\) such that for all \(k,\ell \in \{1,\ldots, K\}\) we have that the sequence \(((A_n)_{k,\ell})_{n\geq 0}\) is absolutely summable in \(\mathbb C\) and consider the sequence of Hermitian block Toeplitz matrices
\[
\mathcal A_n = \begin{bmatrix}
A_0           & A_{1}^\ast & A_{2}^\ast & \cdots  & A_{n}^\ast \\
A_1           & A_0              & A_{1}^\ast & \cdots  & A_{n-1}^\ast \\
A_2           & A_1              & A_0              & \cdots  & A_{n-2}^\ast \\
\vdots        & \vdots           & \vdots           & \ddots & \vdots            \\
A_{n}         & A_{n-1}          & A_{n-2}          & \cdots & A_0
\end{bmatrix} \, \in \, \mathbb C^{(n+1)K \times (n+1)K}.
\]   Putting \(A_{-n} = A_n^\ast\), the \(K \times K\) matrix valued Fourier series \[
F(t) = \sum_{n=-\infty}^{\infty} A_n e^{int}, \qquad t \in [0,2\pi]
\] is Hermitian and we have
     \begin{align*}
    \lim_{n \to \infty} \lambda_{\max} (\mathcal A_n) &= \max_{t \in [0,2\pi]}  \lambda_{\max} (F(t)) \\
    \lim_{n \to \infty} \lambda_{\min} (\mathcal A_n) &= \min_{t \in [0,2\pi]} \lambda_{\min}  (F(t)).
\end{align*}
\end{thm}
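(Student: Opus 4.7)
The plan is to mirror the scalar-case proof of \cref{prop:szego} by reducing everything to a Fourier--Rayleigh argument in which the Hermitian matrix-valued symbol \(F(t)\) plays the role that the scalar symbol \(f(t)\) played there. To each test vector \(v \in \mathbb C^{(n+1)K}\), split into blocks \(v_0, \ldots, v_n \in \mathbb C^K\), I would associate the \(\mathbb C^K\)-valued trigonometric polynomial
\[
V(t) = \sum_{j=0}^{n} v_j \, e^{ijt}.
\]
Expanding \(v^* \mathcal A_n v\) block by block and using the Toeplitz structure (the \((j,k)\)-block of \(\mathcal A_n\) is \(A_{j-k}\), with the convention \(A_{-m} = A_m^*\)), followed by the orthogonality relation \(\frac{1}{2\pi}\int_0^{2\pi} e^{i(k-j+m)t}\,dt = \delta_{k-j+m,0}\), one obtains the twin identities
\[
v^* \mathcal A_n v = \frac{1}{2\pi}\int_0^{2\pi} V(t)^* F(t) V(t)\,dt, \qquad \lVert v \rVert^2 = \frac{1}{2\pi}\int_0^{2\pi} \lVert V(t)\rVert^2\,dt,
\]
the second being Parseval.

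With this setup in hand, the upper bound is immediate: because each \(F(t)\) is Hermitian, \(V(t)^* F(t) V(t) \leq \lambda_{\max}(F(t)) \lVert V(t)\rVert^2 \leq \bigl(\max_{s} \lambda_{\max}(F(s))\bigr)\lVert V(t)\rVert^2\), and integrating and invoking Rayleigh--Ritz yields \(\lambda_{\max}(\mathcal A_n) \leq \max_{t} \lambda_{\max}(F(t))\) for every \(n\). A symmetric argument gives the dual inequality for \(\lambda_{\min}\).

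For the matching lower bound, I would construct an explicit near-eigenvector. Let \(t^* \in [0,2\pi]\) attain \(\max_t \lambda_{\max}(F(t))\), and let \(w \in \mathbb C^K\) be a unit eigenvector with \(F(t^*)w = \lambda_{\max}(F(t^*)) w\). The block-modulated trial vector \(v_j = \frac{1}{\sqrt{n+1}}\, e^{-ijt^*}\, w\) has \(\lVert v\rVert = 1\), and the Parseval identity above reduces to
\[
v^* \mathcal A_n v = \frac{1}{2\pi}\int_0^{2\pi} \mathcal F_n(t-t^*) \bigl(w^* F(t) w\bigr)\,dt,
\]
where \(\mathcal F_n(s) = \frac{1}{n+1}\cdot\frac{\sin^2((n+1)s/2)}{\sin^2(s/2)}\) is the Fejér kernel. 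Since \(\mathcal F_n\) is an approximate identity and \(t\mapsto w^* F(t) w\) is continuous, the integral converges to \(w^* F(t^*) w = \lambda_{\max}(F(t^*))\), so \(\liminf_n \lambda_{\max}(\mathcal A_n) \geq \max_t \lambda_{\max}(F(t))\). The parallel construction at a minimizer \(t_*\) handles \(\lambda_{\min}\).

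The main technical hurdle is the continuity of \(F\), on which both the existence of the optimizer \(t^*\) and the Fejér convergence depend. This is where the absolute-summability hypothesis enters crucially: coordinatewise absolute summability of the \(A_n\) gives \(\sum_n \lVert A_n\rVert < \infty\) in any finite-dimensional matrix norm, so by the Weierstrass M-test the partial sums defining \(F\) converge uniformly in \(t\); hence \(F\) is continuous, and so are its eigenvalues. Once continuity is secured, the block structure contributes no further essential difficulty beyond notational bookkeeping: the scalar Rayleigh bound \(f(t)\leq \max f\) from the scalar Szegő argument is simply replaced by the matrix Rayleigh bound \(w^* F(t) w \leq \lambda_{\max}(F(t))\), and the remainder of the argument is standard Fourier analysis.
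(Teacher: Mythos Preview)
Your argument is correct and complete: the integral representation via the block polynomial \(V(t)\), the pointwise Rayleigh bound for the upper estimate, and the Fej\'er-kernel trial vector for the lower estimate are exactly the standard ingredients, and you handle the continuity of \(F\) (and hence of \(t\mapsto\lambda_{\max}(F(t))\)) properly via the absolute-summability hypothesis.

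Note, however, that the paper does not actually give its own proof of this theorem: it states the result as a known corollary of Szeg\H{o}'s theorem for block Toeplitz matrices and defers to the references \cite{Miranda2000,Gazzah2001}, just as it did for the scalar case in \cref{prop:szego} with the citation to \cite{Gray2006}. So there is nothing in the paper to compare your proof against beyond those citations. Your approach is precisely the block-matrix extension of the scalar argument sketched around \cref{prop:szegobounds} and \cref{prop:szego}: you replace the scalar symbol \(f(t)\) by the Hermitian matrix symbol \(F(t)\), the scalar inequality \(f(t)\le\max f\) by the Rayleigh inequality \(w^\ast F(t)w\le\lambda_{\max}(F(t))\), and the rest of the Fourier-analytic machinery (Parseval, Fej\'er approximate identity) carries over unchanged. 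This is essentially the route taken in the cited literature, so your write-up is a faithful self-contained version of what the paper merely quotes.
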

Similarly as in Lemma \ref{prop:szegobounds}, we also have that these asymptotics for the extreme eigenvalues are effective bounds on all eigenvalues: for each \(n \in \mathbb Z_{\geq 0}\) and each \(\lambda \in \operatorname{Spec}(\mathcal A_n)\), we have \[
\min_{t \in [0,2\pi]} \lambda_{\min}  (F(t)) \leq \lambda \leq  \max_{t \in [0,2\pi]} \lambda_{\max} (F(t)).
\]
Let us define \(F_K(t)\) as the \(K\times K\) matrix valued Fourier series \[
F_K(t)= \sum_{n=-\infty}^{\infty} A_n(K) e^{int}, \qquad t \in [0,2\pi].
\] According to Theorem \ref{prop:blockszego}, we have the result
\[\lim_{N \to \infty} \lambda_{\max} \big(A(N,K) \big) = \max_{t \in [0,2\pi]}  \lambda_{\max} \big( F_K(t) \big).\] The situation here is considerably more complex than the special case \(K=1\), as we now need to compute the maximal eigenvalue of a matrix valued Fourier series (which is trivial only in the case of a \(1\times 1\) matrix). Nonetheless, we conjecture that for any \(K \geq 1\), similar to the case \(K=1\), the maximum is  achieved at \(t=0\): \[
\max_{t \in [0,2\pi]}  \lambda_{\max} \big( F_K(t) \big) = \lambda_{\max} \big( F_K(0) \big).
\]  Fig.\ \ref{fig:fourierMaxK} shows the  function \(\lambda_{\max} \big( F_K(t) \big)\) for the values \(K = 2\) and \(K=4\), providing numerical evidence of the conjecture that the maximum is indeed achieved at \(t=0.\) The graphs for higher values of \(K\) are visually indistinguishable, though the maximum gets closer and closer to \(\pi\).
\begin{figure}
    \centering
 \begin{subfigure}[b]{0.48\textwidth}
    \centering
\includegraphics[width=1\textwidth]{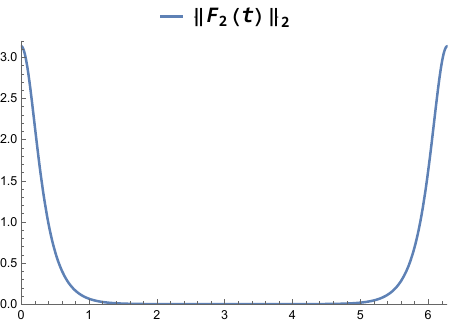}
\end{subfigure}
     \hfill
\begin{subfigure}[b]{0.48\textwidth}
    \centering
\includegraphics[width=1\textwidth]{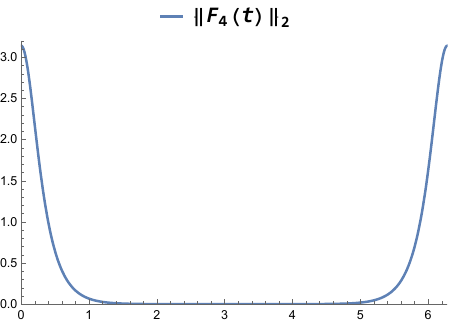}
\end{subfigure}
\caption{The graphs of \(\lambda_{\max}\big( F_2(t)\big)\) and \(\lambda_{\max}\big( F_4(t)\big)\). The graphs corresponding to different values of \(K\) are very similar.}
\label{fig:fourierMaxK}
\end{figure}

\begin{table}
    \centering
    \begin{subtable}{0.32 \textwidth}
        \begin{tabular}{c | c}
        \(K\) & \(\lambda_{\max}\big(F_K(0)\big)\) \\ \hline
   1 &  3.1105201 \\  2 &  3.1330806 \\  3 &  3.1377294 \\  4 &  3.1394026 \\  5 &  3.1401857 \\  6 &  3.1406136 \\  7 &  3.1408725 \\  8 &  3.1410408 \\  9 &  3.1411564 \\  10 &  3.1412391 \\
   11 &  3.1413004 \\  12 &  3.1413470 \\  13 &  3.1413833 \\  14 &  3.1414121 \\  15 &  3.1414354 \\  16 &  3.1414544 \\  17 &  3.1414702 \\  18 &  3.1414834 \\  19 &  3.1414946 \\  20 &  3.1415042
        \end{tabular}
    \end{subtable}
    \begin{subtable}{0.32 \textwidth}
             \begin{tabular}{c | c}
        \(K\) & \(\lambda_{\max}\big(F_K(0)\big)\) \\ \hline
 21 &  3.1415124 \\  22 &  3.1415195 \\  23 &  3.1415258 \\  24 &  3.1415312 \\  25 &  3.1415360 \\  26 &  3.1415403 \\  27 &  3.1415441 \\  28 &  3.1415475 \\  29 &  3.1415506 \\  30 &  3.1415534 \\  31 &  3.1415559 \\  32 &  3.1415581 \\  33 &  3.1415602 \\  34 &  3.1415621 \\  35 &  3.1415638 \\  36 &  3.1415654 \\  37 &  3.1415669 \\  38 &  3.1415682 \\  39 &  3.1415694 \\  40 &  3.1415706
        \end{tabular}
    \end{subtable}
    \begin{subtable}{0.32 \textwidth}
               \begin{tabular}{c | c}
        \(K\) & \(\lambda_{\max}\big(F_K(0)\big)\) \\ \hline
  41 &  3.1415717 \\  42 &  3.1415727 \\  43 &  3.1415736 \\  44 &  3.1415745 \\  45 &  3.1415753 \\  46 &  3.141576 \\  47 &  3.1415768 \\  48 &  3.1415774 \\  49 &  3.1415780 \\  50 &  3.1415786 \\  51 &  3.1415792 \\  52 &  3.1415797 \\  53 &  3.1415802 \\  54 &  3.1415807 \\  55 &  3.1415811 \\  56 &  3.1415815 \\  57 &  3.1415819 \\  58 &  3.1415823 \\  59 &  3.1415826 \\  60 &  3.1415830
        \end{tabular}
    \end{subtable}
    \caption{Numerical evidence for \(\lambda_{\max}\big(F_K(0)\big) \longrightarrow \pi\) as \(K\to\infty\).}
\label{tab:fourierK}
\end{table}

\begin{remark} In addition, we remark that it is not necessary to prove that the maximum is achieved at \(t=0\). It suffices to show \(
    \lambda_{\max}\big(F_K(0)\big) \longrightarrow \pi,
    \) as this would directly imply
    \[
   \max_{t\in [0,2\pi]} \lambda_{\max}\big(F_K(t)\big) \longrightarrow \pi.
    \]
     To establish this, it again suffices to prove that for any \(\delta > 0\), there exists a sufficiently large \(K \geq 1\) such that \(\pi - \delta < \lambda_{\max}\big(F_K(0)\big).\) (The reason why is similar to the argument presented in Remark \ref{rem:underlyingPhysics}.)
\end{remark}

  For reference, the \(K \times K\) matrix valued series \(F_K(0)\) is \[
F_K(0) = \sum_{n=-\infty}^{\infty} A_n(K) = A_0(K) + \sum_{n=1}^{\infty} \big(A_n(K) + A_n^\mathsf T(K) \big).
\]
Numerical experiments clearly suggest that indeed
\[
\lambda_{\max}\big(F_K(0)\big) \longrightarrow \pi \qquad \text{as \(K\to \infty\),}
\] see Table \ref{tab:fourierK}.  As a self-consistency check, we note the correspondence between the exact asymptotic eigenvalue in the special case \(K=1\) obtained in Proposition \ref{prop:K=1}, and the one estimated in Table \ref{tab:fourierK}.
\begin{remark}
      For the numerical experiments (Fig.\ \ref{fig:fourierMaxK} and Table \ref{tab:fourierK}) we approximate \(F_K(t)\) by the partial sum \[\sum_{n=-50}^{50}A_n(K)e^{int}.\] In particular, \(F_K(0)\) is approximated by \[A_0(K) + \sum_{n=1}^{50} \big(A_n(K) + A_n^\mathsf T(K) \big).\]
\end{remark}
\begin{outlook}
    Although a formal proof for \(\lambda_{\max}\big(F_K(0)\big) \longrightarrow \pi\) as \(K \to \infty\) currently eludes us, we expect that it may be done with the appropriate mathematical tools.
\end{outlook}


\section{Bumpification (Step 2)}\label{sec:bump}
Assume now that for a given \(\eta \in (\sqrt2 -1, 1)\), we can find a resolution \(\{N_0,N_1,K\}\) and initial test functions, in terms of Haar wavelets \eqref{eq:expansion}, which present an exact solution to the system of equations \eqref{eq:28}. That is, in this section we \textit{assume} Conjecture \ref{conj:soluble} is true (possibly by assuming Conjecture \ref{conj:max} is true), and demonstrate that these initial test functions may be transformed into \(C^{\infty}\) versions consistent with the smoothness requirements of QFT.

The process is as follows: In Subsection \ref{sub:bumpHaar}, we introduce \(C^\infty\) versions of the Haar wavelets  by smoothing out their points of discontinuity, using a small tuning parameter \(\varepsilon>0\). Then, in Subsection \ref{sub:bumpTest}, we demonstrate that by choosing \(\varepsilon\) sufficiently small, these ``bumpified'' Haar wavelets closely approximate the original wavelets (in the appropriate sense). This ensures that, after substituting the \(C^\infty\) wavelets for the original ones in the expansion \eqref{eq:expansion}, the system of equations \eqref{eq:28} remains satisfied up to the desired degree of precision, controlled by \(\varepsilon.\)

\subsection{Bumpified Haar wavelets}\label{sub:bumpHaar}
To construct smooth ``bumpified'' versions of the Haar wavelets (still compactly supported), following \cite{Mc2010}, the authors of \cite{Dudal2023} employ the \textit{Planck-taper window function}.
\begin{definition}
    The \textit{basic Planck-taper window function} with support on the interval
\([0,1]\)
is defined by
\[
s^\varepsilon (x) =
\begin{cases}
    \left[1+\exp\left(\dfrac{\varepsilon(2x-\varepsilon)}{x(x-\varepsilon)}\right) \right]^{-1} & \text{ if } 0 < x < \varepsilon \\
    1 & \text{ if } \varepsilon \leq x \leq 1 - \varepsilon \\
    \left[1+\exp\left(\dfrac{\varepsilon(-2x-\varepsilon+2)}{(x-1)(x+\varepsilon-
    1)}\right) \right]^{-1}
    & \text{ if } 1-\varepsilon < x < 1 \\
    0 & \text{ otherwise.}
\end{cases}
\]
Here, \(\varepsilon > 0\) is a sufficiently small tuning parameter.
\end{definition}
This function is the \(C^\infty\) version of the `basic rectangle',
\[
r(x) =
\begin{cases}
    1 & \text{ if } 0 \leq x < 1 \\
    0 & \text{ otherwise,}
\end{cases}
\]
smoothing over the  point of discontinuity \(x = 0\)  by making use of the transition function \(x \mapsto \left[1+\exp\left(\frac{\varepsilon(2x-\varepsilon)}{x(x-\varepsilon)}\right) \right]^{-1}\), and symmetrically for the other point \(x=1\). See Fig.\ \ref{fig:planck}.

\begin{figure}
    \centering
\begin{tikzpicture}[scale=3, line width = .8pt]
		\def\xmin{-0.5}
		\def\xmax{1.5}
		\def\ymin{-0.25}
		\def\ymax{1.35}
        \def\eps{0.15}

        \draw[dashed] (0,1) -- (1,1) -- (1,0);
        \draw[stealth-stealth] (0,1.075) --node[midway,above]{\(\varepsilon\)} (\eps,1.075);
        \draw[stealth-stealth] (1-\eps,1.075) --node[midway,above]{\(\varepsilon\)}  (1,1.075);

        \draw[very thick, red] plot [domain=\xmin:0.015] (\x, 0.005);
		\draw [very thick,red] plot [domain=0.015:0.1495,samples=50] (\x,{0.005 + 0.99/(1+exp(
  (\eps*(2*\x-\eps))/(\x*(\x - \eps))
  ))});
        \draw[very thick, red] plot [domain=0.1485:1-0.1485] (\x, 1-0.005);
        \draw [very thick,red] plot [domain=1-0.149:1-0.0175,samples=50] (\x,{0.005 + 0.99/(1+exp(
  (\eps*(2*(1-\x)-\eps))/((1-\x)*((1-\x) - \eps))
  ))});
        \draw[very thick, red] plot [domain=1-0.0175:\xmax-0.035] (\x, 0.005);

        \draw[very thick] (1,-0.02) --node[midway,below]{\(1\)} (1,0.02);

        \node[below left, inner sep = 2 pt] at (0,0) {\(0\)};
        \draw[very thick] (-0.02,1) --node[midway,left]{\(1\)} (0.02,1);

        \draw[-latex]  (\xmin,0) -- (\xmax,0) node[below]{\(x\)};
		\draw[-latex]  (0,\ymin) -- (0,\ymax) ;

\end{tikzpicture}
\caption{The basic Planck-taper window function \(s^\varepsilon\).}  \label{fig:planck}
\end{figure}
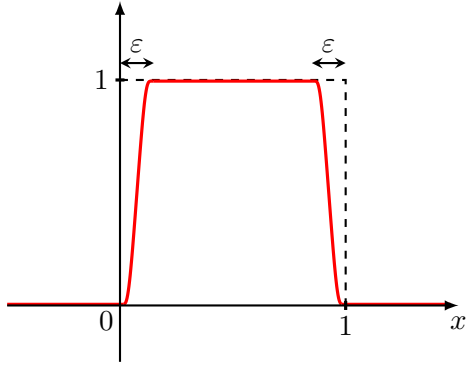

 The deviation between the window function \(s^\varepsilon\) and the basic rectangle \(r\) may be made arbitrarily small by tuning \(\varepsilon\). In fact, if we measure this deviation using any \(L^p\) norm (except \(L^\infty\)), it is proportional to the tuning parameter \(\varepsilon\).

\begin{proposition}\label{thm:Lp}
  Let \(1 \leq p < \infty\), \(0 < \varepsilon < 1/2\), and write \(\lVert \cdot \rVert_{p}\) for the \(L^p\) norm. Then,\[
    \lVert r - s^\varepsilon \rVert^p_{p} = \alpha_p \,  \varepsilon \leq \varepsilon,
    \]
    where \[\alpha_p := \displaystyle{ 2 \int\limits_0^1  \bigg( 1 -  \frac1{1+\exp\left(\frac{2y-1}{y(y-1)}\right)}\bigg)^p \, \mathrm{d}y \,\,\, \in \,\,\,(0,1\mathclose].
    }\]
\end{proposition}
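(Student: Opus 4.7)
The proof is essentially a direct computation combined with a simple symmetry observation, so my plan is to organize it into three short stages: reducing the $L^p$ norm to a single integral, rescaling to eliminate $\varepsilon$, and then bounding $\alpha_p$.

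\textbf{Reduction and rescaling.} First I would observe that $r(x) - s^\varepsilon(x)$ vanishes almost everywhere outside the two ``transition intervals'' $[0,\varepsilon]$ and $[1-\varepsilon, 1]$, since on $[\varepsilon, 1-\varepsilon]$ both functions equal $1$ and outside $[0,1]$ both equal $0$. The definition of $s^\varepsilon$ (its symmetry $s^\varepsilon(x) = s^\varepsilon(1-x)$ on $[0,1]$) implies that the contributions of the two intervals are equal, so
\[
\lVert r - s^\varepsilon \rVert_p^p \;=\; 2 \int_0^\varepsilon \Bigl( 1 - \bigl[1 + \exp\bigl(\tfrac{\varepsilon(2x-\varepsilon)}{x(x-\varepsilon)}\bigr)\bigr]^{-1} \Bigr)^{\!p} \, \mathrm dx.
\]
The key simplification is the substitution $x = \varepsilon y$: a direct calculation shows that the exponent reduces to
\[
\frac{\varepsilon(2\varepsilon y - \varepsilon)}{\varepsilon y(\varepsilon y - \varepsilon)} = \frac{2y - 1}{y(y-1)},
\]
which is independent of $\varepsilon$. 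Together with $\mathrm dx = \varepsilon\, \mathrm dy$, this yields precisely $\lVert r - s^\varepsilon \rVert_p^p = \alpha_p \, \varepsilon$.

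\textbf{Bounding $\alpha_p$.} For the bound $\alpha_p \in (0,1]$, strict positivity is immediate since the integrand is continuous and strictly positive on $(0,1)$. For the upper bound, I would introduce $\sigma(y) := [1+\exp(\tfrac{2y-1}{y(y-1)})]^{-1}$ and exploit the symmetry $\sigma(1-y) = 1 - \sigma(y)$, which follows from the sign flip $\tfrac{2(1-y)-1}{(1-y)(-y)} = -\tfrac{2y-1}{y(y-1)}$ combined with the identity $\tfrac{1}{1+e^{-t}} = 1 - \tfrac{1}{1+e^t}$. Splitting the integral at $1/2$ and applying $y \mapsto 1-y$ to the second half gives
\[
\alpha_p \;=\; 2 \int_0^{1/2} \Bigl( \bigl(1 - \sigma(y)\bigr)^p + \sigma(y)^p \Bigr) \, \mathrm dy.
\]
Since $\sigma$ takes values in $[0,1]$, it then suffices to note that the elementary inequality $t^p + (1-t)^p \leq 1$ holds for all $t\in[0,1]$ and $p \geq 1$ (trivial at endpoints, and interior maxima of $t \mapsto t^p + (1-t)^p$ occur only at $t = 1/2$ where the value is $2^{1-p} \leq 1$). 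This gives $\alpha_p \leq 2 \cdot \tfrac12 = 1$, completing the bound.

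\textbf{Main obstacle.} There is no serious obstacle here: the proposition is really an exercise in bookkeeping once the scaling $x = \varepsilon y$ is identified as the correct substitution. The only spot that requires any thought is verifying the clean symmetry identity $\sigma(1-y) = 1-\sigma(y)$ that makes the upper bound $\alpha_p \leq 1$ transparent; without it one would have to resort to a cruder estimate like $\alpha_p \leq 2$.
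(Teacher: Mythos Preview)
Your proof is correct and follows the paper's argument almost verbatim for the main computation: the reduction to the two transition intervals, the symmetry of $s^\varepsilon$, and the substitution $x = \varepsilon y$ are exactly what the paper does.

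The only difference is in how you establish $\alpha_p \leq 1$. The paper argues that since the integrand $(1-\sigma(y))^p$ takes values in $[0,1]$, the quantity $\alpha_p$ is monotone decreasing in $p$, and then checks $\alpha_1 = 1$ directly (which implicitly uses the same symmetry $\sigma(1-y) = 1-\sigma(y)$ you identified). Your route --- splitting at $1/2$, using the symmetry explicitly, and invoking $t^p + (1-t)^p \leq 1$ --- is a valid alternative and arguably makes the role of the symmetry more transparent. One small wording issue: $t=1/2$ is the interior \emph{minimum} of $t^p+(1-t)^p$ for $p>1$, not a maximum; your argument still goes through because checking endpoints and all interior critical points bounds the global maximum, but you may want to phrase it via convexity instead.
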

\begin{proof}
Since \(r \equiv s^\varepsilon\) outside \((0,\varepsilon) \cup (1-\varepsilon, 1)\), we deduce
    \begin{align*} \left\lVert r - s^\varepsilon \right\rVert^p_{p}
    &= \int\limits_0^\varepsilon \left\lvert r(x) - s^\varepsilon(x) \right\rvert^p \, \mathrm{d}x + \int\limits_{1-\varepsilon}^1 \left\lvert r(x) - s^\varepsilon(x) \right\rvert^p \, \mathrm{d}x \\
       &= 2 \int\limits_0^\varepsilon \lvert r(x) - s^\varepsilon(x) \rvert^p \, \mathrm{d}x.
    \end{align*}
    Now, substituting \(y = x/\varepsilon\)  yields
    \[
   \lVert r - s^\varepsilon \rVert^p_{p} =  2 \int\limits_0^1 \bigg( 1 -  \frac1{1+\exp\left(\frac{2y-1}{y(y-1)}\right)}\bigg)^p \, \varepsilon \, \mathrm{d}y  = \alpha_p \varepsilon.
    \]

    To prove that this constant satisfies \(0 < \alpha_p \leq 1,\)  simply note that \(\alpha_p\) is decreasing in \(p\), and so it suffices to show that \(\alpha_1 = 1\). To prove the identity
    \[
     \int\limits_0^1 2\bigg( 1 -  \frac1{1+\exp\left(\frac{2y-1}{y(y-1)}\right)}\bigg) \, \mathrm{d}y  = 1,
    \] denote the integrand by \(f(y)\), and use the fact that \(f(y) + f(1-y) = 2\). \end{proof}

\begin{remark}
Proposition \ref{thm:Lp} trivially fails for \(p = \infty\), as a uniformly convergent sequence of continuous functions cannot converge to a discountinuous function (like \(r\)).
\end{remark}

\noindent Using the window function \(s^\varepsilon\), we introduce the \(C^\infty\) version of the mother Haar wavelet \(\psi\), by similarly smoothing over its points of discontinuity. See Fig.\ \ref{fig:bump}.

\begin{figure}
    \centering
\begin{tikzpicture}[scale=3, line width = .8pt]
		\def\xmin{-0.5}
		\def\xmax{1.5}
		\def\ymin{-1.05}
		\def\ymax{1.35}
        \def\eps{0.15}

        \draw[dashed] (0,1) -- (0.5,1) -- (0.5,-1) -- (1,-1) -- (1,0);

        \draw[very thick, red] (\xmin, 0.005) -- (0.0075,0.005);
       	\draw [very thick,red] plot [domain=0.0075:0.0745,samples=50] (\x,{0.005 + 0.99/(1+exp(
  (\eps*(2*2*\x-\eps))/(2*\x*(2*\x - \eps))
  ))});
       \draw[very thick, red] plot [domain=0.074:0.5-0.074] (\x, 0.995);
        \draw [very thick,red] plot [domain=0.5-0.0745:0.5-0.007,samples=50] (\x,{0.005 + 0.99/(1+exp(
  (\eps*(2*2*(0.5-\x)-\eps))/(2*(0.5-\x)*(2*(0.5-\x) - \eps))
  ))});
         \draw[very thick, red] plot [domain=0.5-0.0075:0.5+0.0075] (\x, 0.005);
      	\draw [very thick,red] plot [domain=0.5+0.007:0.5+0.0745,samples=50] (\x,{0.005 - 1/(1+exp(
  (\eps*(2*2*(\x-0.5)-\eps))/(2*(\x-0.5)*(2*(\x-0.5) - \eps))
  ))});
       \draw[very thick, red] plot [domain=0.5+0.074:1-0.074] (\x, -0.995);
    \draw [very thick,red] plot [domain=1-0.0745:1-0.007,samples=50] (\x,{-0.005 - 0.99/(1+exp(
  (\eps*(2*2*(0.5-(\x-0.5))-\eps))/(2*(0.5-(\x-0.5))*(2*(0.5-(\x-0.5)) - \eps))
  ))});
        \draw[very thick, red]   (0.995,0.005) -- (\xmax-0.05,0.005);

        \draw[very thick] (1,-0.02) --node[midway,above]{\(1\)} (1,0.02);

        \node[below left, inner sep = 2 pt] at (0,0) {\(0\)};
        \draw[very thick] (-0.02,1) --node[midway,left]{\(1\)} (0.02,1);
        \draw[very thick] (-0.02,-1) --node[midway,left]{\(-1\)} (0.02,-1);

        \draw[-latex]  (\xmin,0) -- (\xmax,0) node[below]{\(x\)};
		\draw[-latex]  (0,\ymin) -- (0,\ymax) ;

\end{tikzpicture}
\caption{The mother bumpified Haar wavelet \(\sigma^\varepsilon = \sigma^\varepsilon_{0,0}.\)} \label{fig:bump}
\end{figure}
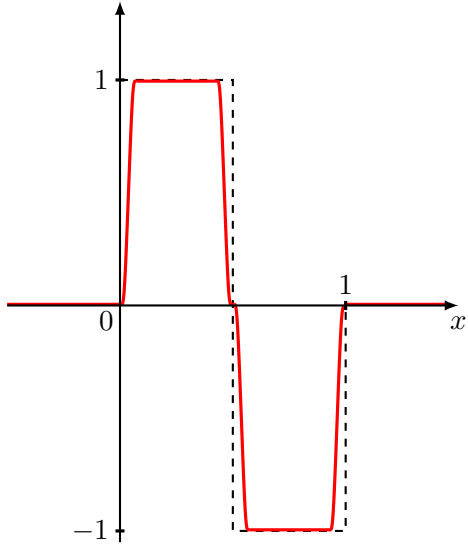

\begin{definition}[Bumpified Haar wavelets]
    The \textit{mother bumpified Haar wavelet} with support on the interval \([0,1]\) is defined by
    \[
    \sigma^\varepsilon(x) = \begin{cases}
        +s^\varepsilon(2x) & \text{ if } 0 \leq x < \dfrac12 \\
        -s^\varepsilon(2x-1) & \text{ if } \dfrac12 \leq x < 1\\
        0 & \text{ otherwise.}
    \end{cases}
    \]
    Bumpified versions  of the Haar wavelets \(\psi_{n,k}\), supported on the intervals \(\overline{I_{n,k}} = [k2^{-n}, (k+1)2^{-n}]\), may then be defined by \[\sigma_{n,k}^\varepsilon(x) = \begin{cases}
        2^{n/2} \sigma^\varepsilon(2^n x - k) & \text{ if } x\in I_{n,k} \\
        0 & \text{ otherwise.}
    \end{cases} \]
\end{definition}

Analogously as in Proposition \ref{thm:Lp}, we can control the deviation between \(\psi_{n,k}\) and \(\sigma_{n,k}\) by tuning \(\varepsilon\), and deduce similar bounds.

\begin{proposition}\label{thm:LpHaar}
    Let \(1\leq p < \infty\). Then, \[
    \lVert \psi - \sigma^\varepsilon \rVert_p^p = \alpha_p \, \varepsilon \leq \varepsilon.
    \]
    Furthermore, for all \(k,n \in \mathbb Z\) we have \[
    \lVert \psi_{n,k} - \sigma^\varepsilon_{n,k}  \rVert_p^p =  2^{n(p/2 -1)} \alpha_p \, \varepsilon .
    \]
\end{proposition}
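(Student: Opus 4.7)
The plan is to derive both equalities directly from the preceding \cref{thm:Lp} via elementary changes of variables that exploit the self-similar construction of the (bumpified) Haar wavelets from shifted rectangle (respectively Planck-taper) pieces.

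For the first equality, the clean observation is that both the Haar wavelet and its bumpification split as sign-changed differences of translated rescaled rectangles and Planck-taper windows: \(\psi(x) = r(2x) - r(2x-1)\) and \(\sigma^\varepsilon(x) = s^\varepsilon(2x) - s^\varepsilon(2x-1)\). The two resulting contributions to \(\psi-\sigma^\varepsilon\) are supported on the disjoint half-intervals \([0,1/2]\) and \([1/2,1]\), so the \(p\)-th powers add pointwise. Substituting \(u=2x\) in the first piece and \(u=2x-1\) in the second, each contributes \(\tfrac{1}{2}\lVert r-s^\varepsilon\rVert_p^p\); invoking \cref{thm:Lp} then gives
\[
\lVert \psi - \sigma^\varepsilon \rVert_p^p = \tfrac{1}{2}\alpha_p\varepsilon + \tfrac{1}{2}\alpha_p\varepsilon = \alpha_p\varepsilon.
\]

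For the rescaled equality, I would use the pointwise identity \(\psi_{n,k}(x) - \sigma^\varepsilon_{n,k}(x) = 2^{n/2}\bigl(\psi(2^n x - k) - \sigma^\varepsilon(2^n x - k)\bigr)\). Raising to the \(p\)-th power and performing the single change of variables \(u = 2^n x - k\) (Jacobian \(2^{-n}\)) pulls out a factor \(2^{np/2}\cdot 2^{-n} = 2^{np/2 - n}\) times \(\lVert \psi - \sigma^\varepsilon\rVert_p^p\), and the first part of the proposition then yields the stated value. The claimed upper bound follows from \(\alpha_p \leq 1\), already recorded in \cref{thm:Lp}.

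Mathematically there is no real obstacle here; the proof reduces to a routine rescaling and a single application of \cref{thm:Lp}. The only items that need attention are the disjointness of the supports of the two pieces of \(\psi-\sigma^\varepsilon\) (so that \(p\)-th powers genuinely add pointwise rather than through some Minkowski-type inequality) and the correct Jacobian for the rescaling.
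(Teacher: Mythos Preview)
Your proof is correct and follows essentially the same route as the paper: split the mother-wavelet integral over the two disjoint half-intervals, substitute to reduce each piece to \(\tfrac12\lVert r-s^\varepsilon\rVert_p^p\), and then do a single affine change of variables for the general \(\psi_{n,k}\). One small point worth noting: your (correct) computation produces the factor \(2^{np/2-n}\), whereas the statement and the paper's proof both carry the exponent \(\tfrac{n}{2p}-n\); these agree only at \(p=1\) (the case actually used later), so the ``stated value'' you refer to appears to be a typo in the paper rather than a discrepancy in your argument.
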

\begin{proof}
    Using that \[
    \psi(x) = \begin{cases}
         +r(2x) & \text{ if } 0 \leq x < \dfrac12 \\
        -r(2x-1) & \text{ if } \dfrac12 \leq x < 1\\
        0 & \text{ otherwise,}
    \end{cases}
    \]
and by applying appropriate substitutions, we may transform \(\lVert \psi - \sigma^\varepsilon \rVert_p^p\) into \(\lVert r - s^\varepsilon \rVert_p^p\):
\begin{align*}
    \lVert \psi - \sigma^\varepsilon \rVert_p^p
    &= \int\limits_0^{1/2}\big\lvert r(2x) - s^\varepsilon(2x) \big\rvert^p \, \mathrm{d}x + \int\limits_{1/2}^1 \big\lvert r(2x-1) - s^\varepsilon(2x-1) \big\rvert^p \, \mathrm{d}x \\
    &= \frac12 \int\limits_0^{1}\big\lvert r(y) - s^\varepsilon(y) \big\rvert^p \, \mathrm{d}y + \frac12 \int\limits_{0}^1 \big\lvert r(y) - s^\varepsilon(y) \big\rvert^p \, \mathrm{d}y \\
    &= \lVert r - s^\varepsilon \rVert_p^p.
\end{align*}
 Similarly, applying the substitution \(y = 2^n x - k\) gives
\begin{align*}
    \lVert \psi_{n,k} - \sigma_{n,k}^\varepsilon \rVert_p^p
    &=  \int_{I_{n,k}} \big\lvert 2^{n/2} \psi (2^n x - k) - 2^{n/2} \sigma^\varepsilon(2^n x - k) \big\rvert^p \, \mathrm{d}x \\
    &= 2^{ {np}/{2}} \int\limits_0^1 \big\lvert \psi (y) - \sigma^\varepsilon(y) \big\rvert^p \,\, \frac{\mathrm{d}y}{2^n}  = 2^{n(p/2 -1)} \,  \lVert \psi - \sigma^\varepsilon \rVert_p^p = 2^{n(p/2 -1)}  \, \lVert r - s^\varepsilon \rVert_p^p.
\end{align*}
Both results now follow from Proposition \ref{thm:Lp}.
\end{proof}

Lemma \ref{prop:haar} remains valid for these bumpified Haar wavelets, and the proof follows is analogous.
\begin{lemma}\label{prop:bumpHaar}
 For all \(n,k \in \mathbb Z\) we have \[\displaystyle {\sigma^\varepsilon_{n,k}(-x) = -\sigma^\varepsilon_{n,-k-1}(x)}.\]
\end{lemma}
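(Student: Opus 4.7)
The plan is to mirror the proof of \cref{prop:haar} step by step, but with one simplification: since the bumpified wavelets are continuous, the a.e.\ qualifier drops away and we obtain pointwise equality. The first step is to isolate the structural fact that does all the work, namely the symmetry of the Planck-taper window about its midpoint:
\[
s^\varepsilon(x) = s^\varepsilon(1-x) \qquad \text{for all } x \in \mathbb R.
\]
This is almost by definition: on \((1-\varepsilon,1)\) it is explicitly built in, on \([\varepsilon,1-\varepsilon]\) both sides equal \(1\), and on \((0,\varepsilon)\) one simply notes that \(1-x \in (1-\varepsilon,1)\) so that \(s^\varepsilon(1-x) = s^\varepsilon(x)\) again by the defining branch. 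Outside \([0,1]\) both sides vanish.

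Next I would establish the case \((n,k)=(0,0)\), i.e.\ \(\sigma^\varepsilon(-x) = -\sigma^\varepsilon_{0,-1}(x) = -\sigma^\varepsilon(x+1)\). This is a check on three subintervals. On \([-1,-\tfrac12]\) one has \(\sigma^\varepsilon(-x) = -s^\varepsilon(-2x-1)\) and \(\sigma^\varepsilon(x+1) = s^\varepsilon(2x+2)\); since \((-2x-1)+(2x+2)=1\), the midpoint symmetry of \(s^\varepsilon\) gives the desired equality. On \([-\tfrac12,0)\) one has \(\sigma^\varepsilon(-x) = s^\varepsilon(-2x)\) and \(\sigma^\varepsilon(x+1) = -s^\varepsilon(2x+1)\), and again \((-2x)+(2x+1)=1\) closes the case. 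Outside \([-1,0)\) both sides vanish. No exceptional points survive, because \(\sigma^\varepsilon\) is continuous.

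Finally I would extend to arbitrary \(n,k \in \mathbb Z\) by the exact same bookkeeping as in the proof of \cref{prop:haar}. By the scaling definition,
\begin{align*}
\sigma^\varepsilon_{n,k}(-x) &= \begin{cases} 2^{n/2}\,\sigma^\varepsilon(2^n(-x)-k) & \text{if } -x \in I_{n,k} \\ 0 & \text{otherwise} \end{cases} \\
 &= \begin{cases} -2^{n/2}\,\sigma^\varepsilon(2^n x + k + 1) & \text{if } x \in I_{n,-k-1} \\ 0 & \text{otherwise} \end{cases} \\
 &= -\sigma^\varepsilon_{n,-k-1}(x),
\end{align*}
where the middle equality uses the already established \(\sigma^\varepsilon(-u) = -\sigma^\varepsilon(u+1)\) applied to \(u = 2^n x - k\), and the shift in the indicator uses that \(-x \in I_{n,k}\) iff \(x \in I_{n,-k-1}\).

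There is no real obstacle here; the only point deserving care is keeping track of the interval bookkeeping in the \((0,0)\) case and verifying that the midpoint symmetry \(s^\varepsilon(x)=s^\varepsilon(1-x)\) actually covers the boundary region \((0,\varepsilon)\) (not only the \(1-\varepsilon<x<1\) branch where it is literally defined). Once these are in hand, the argument is a cosmetic adaptation of the Haar case and, as a bonus, now yields pointwise rather than merely almost-everywhere equality.
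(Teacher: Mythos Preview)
Your proposal is correct and follows exactly the route the paper intends: the paper gives no separate argument for this lemma, merely noting that the proof of \cref{prop:haar} carries over ``in a completely analogous manner,'' which is precisely what you do. Your explicit identification of the midpoint symmetry \(s^\varepsilon(x)=s^\varepsilon(1-x)\) as the replacement for the piecewise check in the Haar case, together with the observation that continuity upgrades the conclusion from a.e.\ to pointwise equality, supplies more detail than the paper itself and is entirely sound.
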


\subsection{The final test functions }\label{sub:bumpTest}
The final step in the procedure is to replace each occurence of the Haar wavelet \(\psi_{n,k}(x)\) by its bumpified counterpart \(\sigma^\varepsilon_{n,k}\) in the expansion \eqref{eq:expansion} of the \textit{initial} set of test functions \((\widetilde f, \widetilde f')\), \((\widetilde g, \widetilde g')\) obtained in Section \ref{sec:probStat}, where we \textit{assume} that these preliminary test functions satisfy the system of equations \eqref{eq:28} \textbf{exactly} for a given \(\eta \in (\sqrt 2 - 1)\) (so assuming Conjecture \ref{conj:soluble} is true).  To certainly satisfy the locality requirement of Alice and Bob's test functions, we implement a small finite translation to the test functions (also using \(\varepsilon\)).  Specifically, we define the new test functions as follows:

\begin{align} \label{eq:Bumptest}
   \begin{array}{c}
\displaystyle{  f_j^{(\varepsilon)}(x)  := \sum_{n = N_0}^{N_1} \sum_{k = -K}^{-1} f_{j}(n,k) \, \sigma^\varepsilon_{n,k}(x+\varepsilon)}, \quad g_j^{(\varepsilon)}(x) := \sum_{m = N_0}^{N_1} \sum_{\ell = 0}^{K-1} g_{j}(m,\ell) \, \sigma^\varepsilon_{n,k}(x-\varepsilon),
   \end{array}
\end{align}
for \(j \in \{1,2\}\), with analogous expressions for the other two test functions \(f'\) and \(g'\).  The key idea is that, by taking \(\varepsilon > 0\) sufficiently small, these new test functions remain close to being normalized and nearly satisfy the necessary inner product relations:
\[
 \Braket{ {f}^{(\varepsilon)} | {g}^{(\varepsilon)} } \approx \Braket{ f'^{(\varepsilon)} | {g}^{(\varepsilon)} }  \approx \Braket{ f^{(\varepsilon)} | g'^{(\varepsilon)} }  \approx - \Braket{ f'^{(\varepsilon)} | g'^{(\varepsilon)} } \approx  -i \frac{\sqrt2 \eta}{1+\eta^2},
\] while still satisfying the smoothness conditions required for them to be valid test functions in the context of QFT. However, these test functions need to be \textit{exactly} normalized, not just approximately. Therefore, we introduce the following normalization factors:
\begin{align*}
    a^{(\varepsilon)} := \Braket{f^{(\varepsilon)}|f^{(\varepsilon)}}^{-1/2}, &\quad a'^{(\varepsilon)} := \Braket{f'^{(\varepsilon)}|f'^{(\varepsilon)}}^{-1/2} \\
    b^{(\varepsilon)} := \Braket{g^{(\varepsilon)}|g^{(\varepsilon)}}^{-1/2}, &\quad b'^{(\varepsilon)} := \Braket{g'^{(\varepsilon)}|g'^{(\varepsilon)}}^{-1/2}.
\end{align*}
If \(\varepsilon\) is small, these factors should all be close to \(1\). We conclude by defining \[ a^{(\varepsilon)}f^{(\varepsilon)}, \quad a'^{(\varepsilon)}f'^{(\varepsilon)}, \quad b^{(\varepsilon)}g^{(\varepsilon)}, \quad b'^{(\varepsilon)}g'^{(\varepsilon)}\] as our  \textbf{final set of test functions} (so now including the small translation) .

To demonstrate that the proposed method of substituting wavelets works, we have to prove the following limit:
\begin{proposition} We have
    \[\displaystyle{\lim_{\varepsilon \to 0} \, \, \Braket{ a^{(\varepsilon)} {f}^{(\varepsilon)} | b^{(\varepsilon)} {g}^{(\varepsilon)} } = -i \frac{\sqrt2 \eta}{1+\eta^2} },\] and similar for the other three inner products.
\end{proposition}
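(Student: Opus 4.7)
The plan is to combine sesquilinearity with the dominated convergence theorem. Since $f^{(\varepsilon)}$ and $g^{(\varepsilon)}$ are finite wavelet expansions whose coefficients do \emph{not} depend on $\varepsilon$ (only the wavelets themselves are bumpified), the difference $\langle f^{(\varepsilon)} | g^{(\varepsilon)} \rangle - \langle \widetilde f | \widetilde g \rangle$ collapses to a fixed finite linear combination of differences of single-term integrals. It therefore suffices to prove convergence term by term.

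The $I_1$ contribution $\int f^*_j g_j \, dx$ vanishes identically, both before and after bumpification, because Alice's wavelets are supported in $(-\infty,0)$ while Bob's are supported in $[0,\infty)$, and the bumpification process preserves supports. For the $I_2$ contribution, each summand has the form $\iint \frac{\sigma^\varepsilon_{n,k}(x) \sigma^\varepsilon_{m,\ell}(y)}{x-y} \, \mathrm{d}x \, \mathrm{d}y$ with $I_{n,k} \subset (-\infty,0)$ and $I_{m,\ell} \subset [0,\infty)$, and here I would apply DCT. Pointwise, the product converges to $\psi_{n,k}(x) \psi_{m,\ell}(y)$ almost everywhere as $\varepsilon \to 0$, while the integrand is uniformly dominated by $\frac{2^{(n+m)/2}}{|x-y|} \mathbb 1_{I_{n,k}}(x) \mathbb 1_{I_{m,\ell}}(y)$. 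Integrability of this dominating function follows from a direct iterated computation: since $x<0<y$ on the product of supports we have $|x-y|=y-x$, and the inner integral produces terms of the form $\ln(y+a)-\ln y$, whose primitives $(y+a)\ln(y+a)-(y+a)$ and $y\ln y - y$ yield a finite (logarithmic) value, in complete analogy with \cref{prop:A}. This delivers $\langle f^{(\varepsilon)} | g^{(\varepsilon)} \rangle \to -i \sqrt 2 \eta / (1+\eta^2)$.

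It remains to handle the normalization factors. By the reality condition, the $I_2$ integrand in $\langle f^{(\varepsilon)} | f^{(\varepsilon)} \rangle$ is antisymmetric in $(x,y)$ and hence integrates (as a principal value) to zero; only the $I_1$ piece survives, and $\int \left(|f^{(\varepsilon)}_1(x)|^2 + |f^{(\varepsilon)}_2(x)|^2\right)\mathrm{d}x \to 1$ by the $L^2$ convergence $\sigma^\varepsilon_{n,k} \to \psi_{n,k}$ provided by \cref{thm:LpHaar} (combined with Cauchy-Schwarz on the finitely many cross-terms). Hence $a^{(\varepsilon)}, b^{(\varepsilon)} \to 1$, and the sesquilinearity identity $\langle a^{(\varepsilon)} f^{(\varepsilon)} | b^{(\varepsilon)} g^{(\varepsilon)} \rangle = a^{(\varepsilon)} b^{(\varepsilon)} \langle f^{(\varepsilon)} | g^{(\varepsilon)} \rangle$ concludes the argument. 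The same reasoning applies verbatim to the other three inner products.

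The main, if modest, obstacle is verifying integrability of the dominating function in the endpoint case $(k,\ell)=(-1,0)$, where both wavelet supports touch the origin and $1/(x-y)$ is nearly singular; the explicit iterated computation sketched above resolves this cleanly, and ensures that no principal-value subtleties creep into the $I_2$ cross integrals.
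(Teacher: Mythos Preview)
Your proposal is correct and follows essentially the same route as the paper: factor out the normalization constants by sesquilinearity, prove $\langle f^{(\varepsilon)} \mid g^{(\varepsilon)} \rangle \to \langle \widetilde f \mid \widetilde g \rangle$ term-by-term via dominated convergence with the same dominating function $2^{(n+m)/2}/|x-y|$ on $I_{n,k}\times I_{m,\ell}$, and show the normalization constants tend to $1$. Your handling of the norms via $L^2$ convergence of $\sigma^\varepsilon_{n,k}\to\psi_{n,k}$ is a minor (and arguably cleaner) variant of the paper's bound $\int|f_1^{(\varepsilon)\,2}-\widetilde f_1^{\,2}|\le \max|f_1^{(\varepsilon)}+\widetilde f_1|\cdot\|f_1^{(\varepsilon)}-\widetilde f_1\|_1$, and you also make explicit two points the paper uses tacitly, namely that $I_1$ vanishes for the cross inner product by disjoint supports and that $I_2$ vanishes for the norm by antisymmetry.
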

Since \[\Braket{ a^{(\varepsilon)} {f}^{(\varepsilon)} | b^{(\varepsilon)} {g}^{(\varepsilon)} } = a^{(\varepsilon)} b^{(\varepsilon)} \Braket{  {f}^{(\varepsilon)} |  {g}^{(\varepsilon)} },\] and similar for the other three inner products, it suffices to prove the following two statements:
\begin{itemize}
    \item The normalization constants \(a^{(\varepsilon)}\), \(b^{(\varepsilon)}\), \(a'^{(\varepsilon)}\) and \(b'^{(\varepsilon)}\) converge to \(1\) as \(\varepsilon \to 0\).
    \item \(\displaystyle{\lim_{\varepsilon \to 0} \, \, \Braket{  {f}^{(\varepsilon)} |{g}^{(\varepsilon)} } =  -i \frac{\sqrt2 \eta}{1+\eta^2}},\) and similar for the other three inner products.
\end{itemize}  These statements are proved in Lemmas \ref{lemma:normalizationConstants} and \ref{lemma:innerProducts} below.
\begin{lemma}\label{lemma:normalizationConstants}
     We have
    \[\lim_{\varepsilon \to 0} \, \, a^{(\varepsilon)} = 1,\] and similar for the other three normalization factors.
\end{lemma}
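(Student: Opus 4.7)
The plan is to show $\big\langle f^{(\varepsilon)} \mid f^{(\varepsilon)} \big\rangle \to \big\langle \widetilde f \mid \widetilde f \big\rangle = 1$ as $\varepsilon \to 0^+$, from which $a^{(\varepsilon)} = \big\langle f^{(\varepsilon)} \mid f^{(\varepsilon)} \big\rangle^{-1} \to 1$ follows by continuity of inversion away from zero. Since $f^{(\varepsilon)}$ and $\widetilde f$ are \emph{finite} linear combinations, with identical coefficients, of bumpified wavelets $\sigma^\varepsilon_{n,k}$ and ordinary wavelets $\psi_{n,k}$ respectively, the sesquilinear form $\big\langle \cdot \mid \cdot \big\rangle$ expands into a finite sum indexed by tuples $(n,k,m,\ell)$. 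Hence, by linearity, it is enough to prove, for each fixed tuple, the two convergences
\[
\int \sigma^\varepsilon_{n,k}\,\sigma^\varepsilon_{m,\ell}\,\mathrm{d}x \longrightarrow \int \psi_{n,k}\,\psi_{m,\ell}\,\mathrm{d}x, \qquad \iint \frac{\sigma^\varepsilon_{n,k}(x)\sigma^\varepsilon_{m,\ell}(y)}{x-y}\,\mathrm{d}x\,\mathrm{d}y \longrightarrow \iint \frac{\psi_{n,k}(x)\psi_{m,\ell}(y)}{x-y}\,\mathrm{d}x\,\mathrm{d}y.
\]

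For the first (regular) integral I would write $\sigma^\varepsilon = \psi + (\sigma^\varepsilon - \psi)$, expand, and estimate each cross term by Cauchy--Schwarz. Applying \cref{thm:LpHaar} with $p=2$ gives $\lVert \sigma^\varepsilon_{n,k} - \psi_{n,k} \rVert_2 = O(\sqrt{\varepsilon})$ and $\lVert \sigma^\varepsilon_{m,\ell} \rVert_2 = O(1)$, so each cross term is $O(\sqrt{\varepsilon})$ and the quadratic residual is $O(\varepsilon)$.

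The main obstacle is the singular kernel $1/(x-y)$ in the second integral, which prevents a direct application of Cauchy--Schwarz on the product. The natural remedy is to recognize the double integral as (a multiple of) a pairing with the Hilbert transform: for $\phi, \chi \in L^2(\mathbb R)$ with compact support,
\[
\iint \frac{\phi(x)\chi(y)}{x-y}\,\mathrm{d}x\,\mathrm{d}y = \pi \int \phi(x) \, (H\chi)(x)\,\mathrm{d}x, \qquad (H\chi)(x) = \frac{1}{\pi} \, \mathrm{p.v.} \int \frac{\chi(y)}{x-y}\,\mathrm{d}y.
\]
Since $H$ is an $L^2$-isometry, $\lVert H\sigma^\varepsilon_{m,\ell} \rVert_2 = \lVert \sigma^\varepsilon_{m,\ell} \rVert_2$, and the same telescoping/Cauchy--Schwarz argument yields
\[
\Big| \iint \frac{\sigma^\varepsilon_{n,k}(x)\sigma^\varepsilon_{m,\ell}(y) - \psi_{n,k}(x)\psi_{m,\ell}(y)}{x-y}\,\mathrm{d}x\,\mathrm{d}y \Big| \leq \pi \Big( \lVert \sigma^\varepsilon_{n,k} - \psi_{n,k} \rVert_2 \lVert \sigma^\varepsilon_{m,\ell} \rVert_2 + \lVert \psi_{n,k} \rVert_2 \lVert \sigma^\varepsilon_{m,\ell} - \psi_{m,\ell} \rVert_2 \Big),
\]
which tends to $0$ by \cref{thm:LpHaar}.

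Summing the termwise convergences over the finite index set establishes $\big\langle f^{(\varepsilon)} \mid f^{(\varepsilon)} \big\rangle \to \big\langle \widetilde f \mid \widetilde f \big\rangle = 1$, and the conclusion $a^{(\varepsilon)} \to 1$ follows. The proofs for $b^{(\varepsilon)}$, $a'^{(\varepsilon)}$, $b'^{(\varepsilon)}$ are identical \emph{mutatis mutandis}, as each corresponds to an analogous finite wavelet expansion normalized to $1$ in the exact setting.
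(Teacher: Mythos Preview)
Your argument is correct, but it differs from the paper's in one structural respect worth noting. The paper exploits the observation (implicit in Section~3.2) that for the \emph{norm} the singular part $I_2$ of the inner product \eqref{eq:innerproduct} vanishes identically: by antisymmetry of the Cauchy kernel, $\iint \tfrac{h(x)h(y)}{x-y}\,\mathrm{d}x\,\mathrm{d}y = 0$ for real $h$. Consequently $\big\langle f^{(\varepsilon)} \mid f^{(\varepsilon)} \big\rangle = \int\big((f_1^{(\varepsilon)})^2 + (f_2^{(\varepsilon)})^2\big)\,\mathrm{d}x$, and the paper handles only this regular integral, via the factorisation $a^2-b^2=(a+b)(a-b)$ together with an $L^\infty\times L^1$ (H\"older) bound and \cref{thm:LpHaar} with $p=1$.

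You instead treat both $I_1$ and $I_2$ on equal footing, invoking the $L^2$-isometry of the Hilbert transform to control the singular double integral. This is a heavier tool than the lemma requires, but it has the virtue of being uniform: essentially the same estimate would also dispatch \cref{lemma:innerProducts}, whereas the paper there switches to a dominated-convergence argument that relies on the disjointness of Alice's and Bob's supports to make $1/|x-y|$ integrable. So your route trades a small simplification here for a single argument that covers both lemmas.
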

\begin{proof}
    It suffices to show that \[\lim_{\varepsilon \to 0} \, \, \Braket{  {f}^{(\varepsilon)} |  f^{(\varepsilon)} } = 1.\] To do this, we make use of the assumption that the preliminary test functions are already normalized, and expand:
    \begin{align*} \left\lvert \Braket{f^{(\varepsilon)} |f^{(\varepsilon)} } -  1 \right\rvert  &=
\left\lvert \Braket{f^{(\varepsilon)} |f^{(\varepsilon)} } -  \Braket{\widetilde f|\widetilde f} \right\rvert \\
&= \left\lvert \int \left(
f_1^{(\varepsilon)\,2} - \widetilde f_1^2 + f_2^{(\varepsilon)\,2} - \widetilde {f}_2^2
\right) \, \mathrm d x \right\rvert \\
&\leq \int \left\lvert f_1^{(\varepsilon)\,2} - \widetilde f_1^2 \right\rvert \, \mathrm d x +
\int \left\lvert f_2^{(\varepsilon)\,2} - \widetilde f_2^2 \right\rvert \, \mathrm d x.
\end{align*}
Here,
\begin{align*}
    \int \left\lvert f_1^{(\varepsilon)\,2} - \widetilde f_1^2 \right\rvert \, \mathrm d x
    &= \int \left\lvert (f_1^{(\varepsilon)} + \widetilde f_1)(f_1^{(\varepsilon)} - \widetilde f_1) \right\rvert \, \mathrm d x \\
    &\leq \max \left\lvert f_1^{(\varepsilon)} + \widetilde{f}_1 \right\rvert \cdot  \int \left\lvert f_1^{(\varepsilon)} - \widetilde f_1 \right\rvert \, \mathrm d x \\
    &= \max \left\lvert f_1^{(\varepsilon)} + \widetilde{f}_1 \right\rvert \cdot \left\lVert  f_1^{(\varepsilon)} - \widetilde f_1 \right\rVert_1,
\end{align*} with \(f_1^{(\varepsilon)} + \widetilde{f}_1\) a sum of \(\varepsilon\)-uniformly bounded functions, and
\[ \left\lVert  f_1^{(\varepsilon)} - \widetilde f_1 \right\rVert_1  \leq \sum_{n=N_0}^{N_1} \sum_{k=-K}^{-1} \left\lvert f_1(n,k) \right\rvert \left\lVert  \sigma_{n,k}^{\varepsilon}({\cdot}+\varepsilon) - \psi_{n,k} \right\rVert_1 \longrightarrow 0 \qquad \text{as \(\varepsilon\to 0\)}.
\]
Indeed,
\begin{align*}
    \left\lVert  \sigma_{n,k}^{\varepsilon}({\cdot}+\varepsilon) - \psi_{n,k} \right\rVert_1 &\leq
    \left\lVert  \sigma_{n,k}^{\varepsilon}({\cdot}+\varepsilon) - \psi_{n,k}(\cdot + \varepsilon) \right\rVert_1 + \left\lVert  \psi_{n,k}(\cdot + \varepsilon) - \psi_{n,k} \right\rVert_1 \\
    &=  \left\lVert  \sigma_{n,k}^{\varepsilon} - \psi_{n,k} \right\rVert_1 + \left\lVert  \psi_{n,k}(\cdot + \varepsilon) - \psi_{n,k} \right\rVert_1
\end{align*}
and both terms converge to zero as \(\varepsilon \to 0\) by Proposition \ref{thm:LpHaar} and the fact that translation is \(L_1\)-continuous, respectively.

Therefore, \[\displaystyle\int \left\lvert f_1^{(\varepsilon)\,2} - \widetilde f_1^2 \right\rvert \, \mathrm d x \longrightarrow 0 \qquad \text{as \(\varepsilon\to 0\).}\] Similarly, \[\displaystyle\int \left\lvert f_2^{(\varepsilon)\,2} - \widetilde f_2^2 \right\rvert \, \mathrm d x \longrightarrow 0 \qquad \text{as \(\varepsilon\to 0\)}.\] This completes the proof.
\end{proof}

\begin{lemma}\label{lemma:innerProducts}
   We have
    \[\displaystyle{\lim_{\varepsilon \to 0} \, \, \Braket{  {f}^{(\varepsilon)} |  {g}^{(\varepsilon)} } = } -i \frac{\sqrt2 \eta}{1+\eta^2},\] and similar for the other three inner products.
\end{lemma}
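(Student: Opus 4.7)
The plan is to decompose $\braket{f^{(\varepsilon)}|g^{(\varepsilon)}}$ according to (\ref{eq:innerproduct}) into its local part $I_1$ and its nonlocal part $I_2$, and handle each separately. First, since Alice's (bumpified) test functions remain supported on $x\le 0$ and Bob's on $x\ge 0$ (the bumpification never enlarges the support of a Haar wavelet), the pointwise products $f_j^{(\varepsilon)*}(x)g_j^{(\varepsilon)}(x)$ and $\widetilde f_j^*(x)\widetilde g_j(x)$ vanish almost everywhere, so $I_1=0$ in both the bumpified and the preliminary case. It therefore suffices to prove $I_2^{(\varepsilon)}\to\widetilde I_2$. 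Using the telescoping identity
\[
f_j^{(\varepsilon)*}g_j^{(\varepsilon)}-\widetilde f_j^*\widetilde g_j=\bigl(f_j^{(\varepsilon)}-\widetilde f_j\bigr)^* g_j^{(\varepsilon)}+\widetilde f_j^*\,\bigl(g_j^{(\varepsilon)}-\widetilde g_j\bigr)
\]
and substituting the Haar expansions (\ref{eq:expansion}) together with their bumpified counterparts, I would reduce the task to showing that each of the finitely many integrals
\[
T^\varepsilon_{n,k,m,\ell}:=\iint \frac{\bigl(\sigma^\varepsilon_{n,k}(x)-\psi_{n,k}(x)\bigr)\,h_{m,\ell}(y)}{x-y}\,\mathrm{d}x\,\mathrm{d}y,
\]
and its $x\leftrightarrow y$ analogue, tends to zero as $\varepsilon\to 0$; here $h_{m,\ell}$ equals either $\psi_{m,\ell}$ or $\sigma^\varepsilon_{m,\ell}$, and in either case $\lVert h_{m,\ell}\rVert_\infty\le 2^{m/2}$.

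I would then exploit the structure of $\sigma^\varepsilon_{n,k}-\psi_{n,k}$: from the definitions this difference is supported on at most four intervals of total Lebesgue measure $\le 2\varepsilon\cdot 2^{-n}$ sitting near the three discontinuities of $\psi_{n,k}$, with sup-norm bounded by $2^{n/2}$. After the substitution $u=-x$ the kernel $1/(x-y)$ becomes $-1/(u+y)$ on the quadrant $u,y>0$, so the only singularity within the domain of integration is at the corner $(0,0)$. Whenever every support interval of $\sigma^\varepsilon_{n,k}-\psi_{n,k}$ is bounded away from $u=0$ (that is, when smoothing over the midpoint or the far-left endpoint of a wavelet, which includes every case with $k\le -2$), the kernel is uniformly bounded on the region of interest and the estimate $T^\varepsilon_{n,k,m,\ell}=O(\varepsilon)$ is immediate.

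The delicate case is $k=-1$, where one smoothing interval sits in $(-\varepsilon\cdot 2^{-n-1},0)$, flush with the corner. There I would first bound
\[
\int_0^{2^{-N_0}K}\frac{|h_{m,\ell}(y)|}{u+y}\,\mathrm{d}y\ \le\ C_{m,\ell}\bigl(1+|\log u|\bigr)
\]
for small $u>0$ (the worst contribution coming from $\ell=0$, where the inner integral is $2^{m/2}\log((2^{-m}+u)/u)$), and then integrate in $u$ to obtain a contribution of order $\varepsilon\log(1/\varepsilon)$, which still vanishes. An analogous corner analysis applies when the $y$-deviation $\sigma^\varepsilon_{m,\ell}-\psi_{m,\ell}$ with $\ell=0$ is in play. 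Summing the finitely many contributions from the wavelet expansions yields $I_2^{(\varepsilon)}\to \widetilde I_2=-i\frac{\sqrt{2}\eta}{1+\eta^2}$. The other three inner products $\braket{f'|g}$, $\braket{f|g'}$, $\braket{f'|g'}$ have identical analytic structure (they only permute and sign-flip the two spinor components under the symmetries imposed in \cref{sec:probStat}), so the same argument applies verbatim.

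The main obstacle is precisely this corner estimate: a naive $L^1$-versus-$L^\infty$ split fails because $1/(u+y)$ has a logarithmic singularity at the origin, but the elementary bound $\int_0^\varepsilon|\log u|\,\mathrm{d}u=O(\varepsilon\log(1/\varepsilon))$ just suffices to close the argument. Since the resolution $\{N_0,N_1,K\}$ is fixed throughout, no dimension-free estimate is needed and only finitely many integrals must be controlled, which is what makes this elementary approach work without invoking heavier machinery such as boundedness of the Hilbert transform on appropriate function spaces.
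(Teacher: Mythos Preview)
Your argument is correct, but it takes a different route from the paper's. The paper dispenses with the corner analysis entirely by applying the dominated convergence theorem: since $\sigma^\varepsilon_{n,k}(x)\sigma^\varepsilon_{m,\ell}(y)\to\psi_{n,k}(x)\psi_{m,\ell}(y)$ pointwise a.e.\ and is dominated by $2^{n/2}2^{m/2}/|x-y|$, which is integrable on $I_{n,k}\times I_{m,\ell}$ (the only singularity being the corner at the origin, and $\iint_{[0,1]^2}\frac{du\,dv}{u+v}<\infty$), the limit of each wavelet integral follows in one line. Your telescoping-plus-support-analysis approach is longer and requires the explicit logarithmic corner estimate you describe, but it buys something the paper's proof does not: a quantitative rate $O(\varepsilon\log(1/\varepsilon))$ for the deviation of each inner product, rather than mere convergence. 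Both approaches rely on the same underlying fact (integrability of the kernel at the corner), just packaged differently---the paper hides it inside the DCT hypothesis, you make it explicit.
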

\begin{proof}
    Here, too, we make use of the assumption that the preliminary test functions already satisfy the desired inner products, and expand:
 \begin{align*} \left\lvert \Braket{f^{(\varepsilon)} |g^{(\varepsilon)} } + i \frac{\sqrt2 \eta}{1+\eta^2} \right\rvert  &=
\left\lvert \Braket{f^{(\varepsilon)} |g^{(\varepsilon)} } -  \Braket{\widetilde f|\widetilde g} \right\rvert \\
&\leq \frac{1}{\pi} \sum_{n,k,m,\ell}  \Big|f_1(n,k)g_1(m,\ell) - f_2(n,k)g_2(m,\ell) \Big| \notag \\
  \times \,   \bigg\lvert \iint &\left( \frac{1}{x-y} \right) \, \Big(\sigma^\varepsilon_{n,k}(x+\varepsilon) \sigma^\varepsilon_{m,\ell}(y-\varepsilon) - \psi_{n,k}(x)\psi_{m,\ell}(y) \Big) \,\mathrm{d}x\mathrm{d}y \bigg\rvert.
    \end{align*}
    As this is a finite sum, it suffices to show that for each \(n,k ,m,\ell\) contained in the relevant resolution \(\{N_0,N_1,K\}\) (as before, \(k \leq -1\) and \(\ell \geq 0\)), we have \[
    \lim_{\varepsilon\to 0} \,\, \bigg\lvert \iint_{\widetilde{I_{n,k}}\times \widetilde{I_{m,\ell}}} \left( \frac{1}{x-y} \right) \, \Big(\sigma^\varepsilon_{n,k}(x) \sigma^\varepsilon_{m,\ell}(y) - \psi_{n,k}(x)\psi_{m,\ell}(y) \Big) \,\mathrm{d}x\mathrm{d}y \bigg\rvert = 0,
    \]
    where the integrand is (due to the finite translation) supported on the \textit{adapted} domain
\[
 \widetilde{I_{n,k}}\times \widetilde{I_{m,\ell}} = \big[k2^{-n} - \varepsilon, (k+1)2^{-n}\big] \times [\ell 2^{-m}, (l+1) 2^{-m} + \varepsilon ],
\]
  which is  still contained in some \([-M,0] \times [0,M]\) for a sufficiently large \(M > 0\).

    The former limit may now be proved by applying the dominated convergence theorem to the following pointwise limit: \[
 \lim_{\varepsilon \to 0}\,\,   \frac{\sigma^\varepsilon_{n,k}(x) \sigma^\varepsilon_{m,\ell}(y)}{x-y} = \frac{\psi_{n,k}(x) \psi_{m,\ell}(y)}{x-y} \qquad \text{whenever \(x \neq y\)}.
    \]  Since \(\{x\neq y\}\) is the complement of a null set in \(\mathbb R^2\), the dominated convergence theorem can indeed still be applied.
    On the relevant domain, the sequence of functions may be dominated as follows: \[
   \left\lvert  \frac{\sigma^\varepsilon_{n,k}(x+\varepsilon) \sigma^\varepsilon_{m,\ell}(y-\varepsilon)}{x-y} \right\rvert \leq
 \frac{2^{n/2} 2^{m/2} }{\lvert x - y \rvert}, \] where \(|x-y|^{-1}\) is integrable on \(\widetilde{I_{n,k}}\times \widetilde{I_{m,\ell}} \subset [-M,0] \times [0,M]\) just like in the proof of Lemma~\ref{prop:A}. For details, we again refer to Appendix~\ref{app:A}.
\end{proof}

\section{Outlook}
We have provided further evidence of the usefulness of bumpified Haar wavelets, as introduced first in \cite{Dudal2023} and put on firmer formal grounds here, to construct explicit examples of test functions that lead to violations of the Bell-CHSH inequality for free fermion quantum field theories in $d=2$.  We were not able yet to prove
that it can asymptotically furnish the maximal violation (despite strong numerical evidence), cf. Conjecture B. One might notice that the formulation of the problem in terms of \eqref{eq:problem} is general, in the sense that one can look at the matrix representation of the corresponding integral kernel $\frac{1}{x+y}$ in any (orthonormal) basis of choice. Since this kernel resembles that of the Hilbert transform\footnote{Note however that it is not the Hilbert transform, since defined over the half axis and with a subtle sign difference.} which happens to have exactly $\pi$ as $L_2(\mathbb{R})$-norm in our conventions, see e.g.~\cite{stein1959extension,pichorides1972best}, one cannot help but wonder if our Conjecture B could be proven using suitable tools from operator analysis over Hilbert spaces. This is currently under investigation and we hope to come back to this issue in future work.

Although our current approach is less general than the Algebraic QFT Summers-Werner proofs \cite{Summers1987,SummersI1987,SummersII1987}, it does have one big advantage: it is constructive in nature and allows to construct explicit test functions, and by enlarging both resolution and support of the test functions, also stronger and stronger violations. As such, it enables generalizations to the \emph{interacting} QFT cases, which fall completely out of reach of \cite{Summers1987,SummersI1987,SummersII1987}, whose tools are unfortunately limited to free QFTs. Evidently, we do expect that interacting QFTs also exhibit Bell-CHSH violations due to entanglement. A logical first testbed of an interacting QFT would be the $d=2$ Thirring model, which shares its quadratic part with the free case studied here, but supplemented with a 4-fermion interaction. Its Green's functions are known exactly, as one of the very few QFTs, see e.g.~\cite{Johnson:1961cs,Thompson:1983yr,Bozkaya:2005mm}. This implies the exact K\"allén-Lehmann spectral function is also known, and the latter function is of the utmost importance as it enters the definition of the generalized inner products between the test functions. We will report on this in future work, making extensive use of the bumpified Haar wavelets technology.

\section*{Acknowledgments}
 We are grateful to the anonymous Reviewer for urging us to pay careful attention to the construction of the test functions near the closure of the complementary wedges, as well for pointing towards the Hilbert-like transform nature of our problem.

\appendix

\section{\texorpdfstring{$|x+y|^{-1}$}{1/|x+y|} is integrable on \texorpdfstring{$[0,M]\times[0,M]$}{[0,M]×[0,M]}} \label{app:A}

  Fix \(M > 0\). Let us formally prove that the function \[
f\colon \mathbb R^2 \longrightarrow [0,+\infty]\colon (x,y) \longmapsto \frac1{|x+y|}
\] is integrable on  $[0,M]\times[0,M]$ (we define \(f(x,y) =+ \infty\) whenever \(x+y = 0\)).

This fact is used in the proofs of Lemmas~\ref{prop:A} and~\ref{lemma:innerProducts}. Indeed, after a change of variables, it is clear that this is sufficient to formally prove the integrability of $|x+y|^{-1}$ on $[0,M]\times[0,M]$ and of $|x-y|^{-1}$ on $[-M,0]\times[0,M]$.
Since \(f\) is a positive measurable function, it suffices to show the following (well-defined)  integral is finite.

\begin{lemma} We have
$\displaystyle{\iint_{[0,M]\times [0,M]} \frac{1}{x+y} \; \mathrm{d}(x,y)} < \infty$.
\end{lemma}
\begin{proof}
    After a change of variables \(u = x\) and \(v = x+y\), it suffices to prove that
    \[
    \iint_D \frac1v \; \mathrm{d}(u,v) < \infty \quad \text{with} \quad D = \big\{(u,v) \in \mathbb R^2 \mid 0 \leq u \leq M, u \leq v \leq u+ M\big\}.
    \]

    For each \(\varepsilon > 0\), define \(D_\varepsilon = \big\{(u,v) \in \mathbb R^2 \mid \varepsilon \leq u \leq M, u \leq v \leq u+ M\big\}.\) Since \(\frac1v\) is continuous and bounded on \(D_\varepsilon\), it is Riemann integrable, hence certainly
    \[
\iint_{D_\varepsilon} \frac1v \; \mathrm{d}(u,v) < \infty \quad \text{for every \(\varepsilon >0\).}
    \]

    By the monotone convergence theorem, it therefore suffices to prove that
    \[
 \iint_D \frac1v \; \mathrm{d}(u,v) \overset{\text{M.C.T.}}{=} \lim_{\varepsilon \to 0^+}  \iint_{D_\varepsilon}  \frac1v \; \mathrm{d}(u,v) < \infty.
    \]
   Using Fubini's theorem, a routine calculation allows us to deduce that
    \begin{align*}
    \iint_{D_\varepsilon}  &\frac1v \; \mathrm{d}(u,v) = \int_\varepsilon^M \big( \ln(u+M) - \ln u \big) \, \mathrm{d}u = 2 M \ln 2+ M \ln M - (\varepsilon + M)\ln(\varepsilon +M) + \varepsilon \ln \varepsilon.
    \end{align*}
    Therefore indeed,
    \[
\iint_D \frac1v \; \mathrm{d}(u,v) = \lim_{\varepsilon \to 0^+}  \iint_{D_\varepsilon} \! \frac1v \; \mathrm{d}(u,v) = 2M \ln 2 < \infty     ,
    \] completing the proof.
\end{proof}

\bibliography{ms}

\end{document}